%% file: main.tex
\documentclass[11pt]{article}
\usepackage[a4paper,margin=26mm]{geometry}
\usepackage[T1]{fontenc}
\usepackage{lmodern}
\usepackage{amsmath,amssymb,amsthm,mathtools,booktabs,array,enumitem,microtype,hyperref}
\hypersetup{colorlinks=true,linkcolor=blue,citecolor=blue,urlcolor=blue,
  pdftitle={A Four-by-Four Obstruction to Polynomial Removal for Ordered Binary Matrices},
  pdfauthor={Zinan Wang},pdfsubject={Ordered binary matrix removal}}
\newtheorem{theorem}{Theorem}[section]
\newtheorem{lemma}[theorem]{Lemma}
\newtheorem{proposition}[theorem]{Proposition}
\newtheorem{corollary}[theorem]{Corollary}
\newtheorem{remark}[theorem]{Remark}
\newcommand{\dist}{\operatorname{dist}}
\newcommand{\Copy}{N_{H_4}}
\newcommand{\ind}{\mathbf1}
\newcommand{\code}[1]{{\small\nolinkurl{#1}}}
\title{A Four-by-Four Obstruction to Polynomial Removal for Ordered Binary Matrices}
\author{Zinan Wang}
\date{October 11, 2026}

\begin{document}
\maketitle

\begin{abstract}
We exhibit a fixed $4\times4$ binary matrix with no polynomial induced
ordered removal bound. At depth $h$, write $m=2^h$. Our explicit host has
order $(14h+2)m$, exactly $2m^6+\frac12(m-2)m^5$ induced copies,
and exact editing distance $m^2$ cells. Consequently its normalized
edit distance is $\Theta(h^{-2})$, while its copy density is
$\Theta(4^{-h}h^{-8})$. The lower bound holds against arbitrary cell
edits by an injective family of $m^2$ protected selections; an
arithmetic classification identifies every copy and certifies that a
matching canonical repair creates none. The original-copy transversal
number is exactly $m$, establishing an unbounded gap between hitting
existing copies and eliminating all copies. The identities and the
all-depth obstruction are formalized in Lean, with complete certificate
sources and reproducible verification procedures.
\end{abstract}

\section{Introduction and main results}
Let $H$ be a binary $k\times k$ matrix and let $A$ be a binary $n\times n$
matrix. An induced ordered copy of $H$ in $A$ is a pair of strictly
increasing $k$-tuples of row and column indices whose induced submatrix
agrees with $H$ in every entry. Write $N_H(A)$ for the number of such
copies. Let $D_H(A)$ be the minimum number of cell changes that turn $A$
into an $H$-free matrix, allowing both $0\to1$ and $1\to0$ changes, and
write $\dist_H(A)=D_H(A)/n^2$.
A polynomial induced removal bound for $H$ is an inequality of the form
\begin{equation}\label{eq:removal-bound}
 N_H(A)\ge c\,\dist_H(A)^C n^{2k}
\end{equation}
with constants $c,C>0$ independent of $A$ and $n$. Ordered removal lemmas
and their property-testing consequences are studied in \cite{ABF,ABE20}.
The earlier construction of \cite{OAI} exhibited a $66\times66$
obstruction to polynomial removal. Here the forbidden pattern is
$4\times4$ and remains fixed at every depth. We determine its exact
copy count and unrestricted editing distance, and identify the original-copy
transversal number. Unlike an original-copy hitting set, an editing set
must leave a matrix free of all copies, including those created by the
edits (see \cite{ABE20}). The smaller forcing frame and its terminal
classification provide the principal new quantitative results.

We consider the pattern
\begin{equation}\label{eq:pattern}
 H_4=\begin{pmatrix}
 0&1&1&0\\
 1&0&0&1\\
 1&0&1&0\\
 0&1&1&1
 \end{pmatrix}
 =\begin{pmatrix}J_2-I_2&I_2\\I_2&P\end{pmatrix},
 \qquad P=\begin{pmatrix}1&0\\1&1\end{pmatrix}.
\end{equation}
The rows of $H_4$ are pairwise distinct, as are its columns. Hence physical
copies can be counted by weighted block tuples (Lemma~\ref{lem:quotient}).

\begin{theorem}[Main theorem]\label{thm:main}
For every integer $h\ge1$, put $m=2^h$ and $n_h=(14h+2)m$. Let $A_h$ be
the binary matrix constructed in Section~\ref{sec:host}. Then
\begin{equation}\label{eq:main}
 D_{H_4}(A_h)=m^2,\qquad
 \dist_{H_4}(A_h)=\frac{1}{(14h+2)^2},\qquad
 \Copy(A_h)=2m^6+\frac{m-2}{2}m^5.
\end{equation}
Moreover, the canonical repair $B_h$ is $H_4$-free, and every induced
ordered copy in $A_h$ belongs to exactly one of the two terminal families
described in Section~\ref{sec:classification}.
\end{theorem}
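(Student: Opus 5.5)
The proof splits into three independent parts: the copy count, the upper bound $D_{H_4}(A_h)\le m^2$ via the canonical repair, and the matching lower bound $D_{H_4}(A_h)\ge m^2$ against arbitrary edits.

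\emph{Copy count.} Since the rows and columns of $H_4$ are pairwise distinct, Lemma~\ref{lem:quotient} lets us count physical copies as weighted block tuples. Every row (or column) of $A_h$ belongs to one of the $14h+2$ blocks of size $m$ from Section~\ref{sec:host}, and within a block all rows are identical up to the explicit structure. So an induced copy is determined by a choice of four row blocks and four column blocks, together with positions inside those blocks, with order constraints where blocks coincide.

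The classification is arithmetic in the block indices. For each candidate $4$-tuple of row blocks, the constraint that rows $1,2$ realise $J_2-I_2$ on columns $1,2$ and $I_2$ on columns $3,4$ forces the tuple into a narrow window of levels. This leaves exactly the two terminal families of Section~\ref{sec:classification}. I would then count each family directly.
\begin{itemize}
\item One family should contribute $2m^6$, for example two orientations each with six free coordinates ranging over $m$ values.
\item The other should contribute $\tfrac{m-2}{2}m^5$, arising from a pair constrained by a parity or ordering condition inside a level.
\end{itemize}
Exhaustiveness is the delicate part. The case analysis must propagate through all $h$ levels, and I would organise it as an induction on depth: any copy touching level $j<h$ reduces, by the self-similar recursion of the construction, to a configuration at a lower level, until it reaches a terminal one.

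\emph{Upper bound.} The canonical repair $B_h$ changes exactly $m^2$ cells, presumably one $m\times m$ block toggled at the terminal level. To show $B_h$ is $H_4$-free, I would rerun the same block-level classification on $B_h$. The goal is to verify that both terminal families are destroyed and that no new block tuple becomes admissible. Because the edit is confined to a single block pair, only tuples using that block pair need rechecking, which is a finite and depth-independent check.

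\emph{Lower bound.} For $D_{H_4}(A_h)\ge m^2$ against arbitrary edits, I would use the injective family of $m^2$ protected selections. These are $m^2$ copies (or small gadgets of copies) indexed by pairs $(a,b)\in[m]^2$ with pairwise disjoint sets of ``critical'' cells. They must be chosen so that any edit set $E$ with $|E|<m^2$ leaves some selection with no edited cells, and hence a surviving copy. Disjointness alone handles only copies of $A_h$. The subtle point is that edits elsewhere cannot invalidate a selection, since a selection is a fixed set of $16$ cells: if none of its cells is edited it remains a copy. Hence any $H_4$-free result must touch every selection, and injectivity gives $|E|\ge m^2$.

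The main obstacle is constructing these disjoint selections while keeping them compatible with the copy structure. The transversal number being only $m$ shows that copies overlap heavily, so the selections must be built from copies spread across levels whose $16$-cell supports are disjoint. I would first try indexing by a row position in the top level and a column position in the bottom level.

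Finally, $\dist_{H_4}(A_h)=m^2/((14h+2)m)^2=1/(14h+2)^2$, which is immediate.
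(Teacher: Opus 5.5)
\textbf{The lower bound fails, and this is the heart of the theorem.} You want $m^2$ selections with pairwise disjoint cell sets, each a copy of $H_4$ in $A_h$ (or containing one), so that an untouched selection ``remains a copy''. No such family exists. Every copy in $A_h$ contains a singleton leaf-diagonal cell $(R_h(p),C_h(p))$ as its third-row, third-column entry. These $m$ cells therefore meet every original copy, so at most $m$ copies can be pairwise cell-disjoint. This is exactly $\tau_{H_4}(A_h)=m$ (Corollary~\ref{cor:exacttransversal}), which you cite yourself and which already rules out your packing.

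More generally, any argument that uses only ``an eliminating edit set hits every copy already present'' is capped at $\tau_{H_4}(A_h)=m$. To reach $m^2$ you must use that the edited matrix also contains no \emph{newly created} copies. The missing idea is a forcing argument through unprotected cells:
\begin{itemize}
\item The selection indexed by $(a,b)$ follows the tree path of $z=a\oplus b$. It consists not of copies but of marker frames, i.e.\ the twelve entries of $H_4$ outside the lower-right $2\times2$ body. It uses the modes $V_i^\pm$ and $W^\pm_{i,p_i\bmod2}$ at every depth, together with dummy incidences and root entries.
\item Only these cells are protected, and the XOR offsets make them disjoint across the $m^2$ selections. The variable body cells are shared and may be edited.
\item Suppose $D$ is $H_4$-free and keeps one selection's protected cells. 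Each frame forbids its body from equalling $P$, which gives $x_{i-1}=1\Rightarrow\alpha_i=1\Rightarrow x_i=1$ and $y_{i-1}=0\Rightarrow\beta_i=0\Rightarrow y_i=0$.
\item Starting from $x_0=1$ and $y_0=0$, the plus and minus chains force the single shared leaf cell to be both $1$ and $0$.
\end{itemize}

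\textbf{Your other two parts are not yet arguments either.}
\begin{itemize}
\item The canonical repair changes $R_*\times\{C_h(p):p<m\}$. That is $m$ distinct block pairs, since the leaf columns are singleton blocks spread through the column tree, not a single block pair. A potential new copy through an edited cell may also use blocks at any depth. So your stated reason for a finite, depth-independent recheck does not apply.
\item ``Induction on depth by self-similarity'' supplies no reduction step. The host is also not literally self-similar, because each marker mode interacts with an entire level.
\item The paper settles both exhaustiveness and repaired freeness for all depths at once. It does so with the arithmetic certificate of Lemma~\ref{lem:arithmetic}, $\mathcal C_h^0\Rightarrow T_h$ and $\mathcal C_h^1\Rightarrow\bot$, applied to encoded block tuples.
\item The blocks are not all of size $m$, and the count comes from block weights, not from ``six free coordinates''. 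Family I is $m$ leaf labels times $2m^5$ realizations: four marker blocks and the dummy row of size $m$, a parent column of size $2$, and singleton leaves. Family II is $m(m-2)/4$ same-parity pairs $q<p$ times $2m^4$.
\end{itemize}
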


\begin{corollary}[Failure of polynomial removal]\label{cor:failure}
The pattern $H_4$ admits no polynomial induced ordered binary matrix removal
bound. More precisely, if
\[
 \epsilon_h=(14h+2)^{-2},\qquad
 \delta_h=\frac{\Copy(A_h)}{n_h^8},
\]
then
\[
 n_h\longrightarrow\infty,\qquad \epsilon_h\longrightarrow0,
 \qquad \frac{\delta_h}{\epsilon_h^C}\longrightarrow0
 \quad\text{for every fixed }C\in\mathbb R.
\]
\end{corollary}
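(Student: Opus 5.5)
The corollary follows from Theorem~\ref{thm:main} by direct asymptotics, and I will not look inside the construction at all. As $h\to\infty$ we have $m=2^h\to\infty$, so $n_h=(14h+2)2^h\to\infty$. Also $\epsilon_h=\dist_{H_4}(A_h)=(14h+2)^{-2}\to0$ by \eqref{eq:main}.

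\textbf{Step 1: the size of $\delta_h$.} By \eqref{eq:main}, $\Copy(A_h)=2m^6+\frac{m-2}{2}m^5$. For $m\ge2$ this lies between $\frac12 m^6$ and $3m^6$. Dividing by $n_h^8=(14h+2)^8m^8$ gives
\[
 \delta_h\le \frac{3}{(14h+2)^8 m^2}=\frac{3}{(14h+2)^8\,4^{h}} .
\]
This also confirms the advertised density $\Theta(4^{-h}h^{-8})$.

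\textbf{Step 2: comparing with $\epsilon_h^C$.} For any fixed real $C$, $\epsilon_h^{C}=(14h+2)^{-2C}$. Hence
\[
 \frac{\delta_h}{\epsilon_h^C}\le 3\,(14h+2)^{2C-8}\,4^{-h}.
\]
This tends to $0$ because exponential decay beats any polynomial factor in $h$, whatever the sign of $2C-8$.

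\textbf{Step 3: failure of the removal bound.} Suppose \eqref{eq:removal-bound} held with constants $c,C>0$, and apply it to $A=A_h$ with $k=4$. This gives $\delta_h=\Copy(A_h)/n_h^8\ge c\,\epsilon_h^C$, i.e.\ $\delta_h/\epsilon_h^C\ge c>0$ for every $h$. That contradicts Step~2, so no polynomial bound exists.

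There is no real obstacle here; all the difficulty sits in Theorem~\ref{thm:main}. The only points needing care are the matching of normalizations, namely $n^{2k}=n_h^8$ and $\dist=D/n^2$ with $n=n_h$ (which is what gives $\epsilon_h$), and the uniform handling of all real $C$ in Step~2, including $C\le0$.
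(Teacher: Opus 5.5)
Your proof is correct and is essentially the paper's argument: you bound $\delta_h\le 3/\bigl((14h+2)^8 4^h\bigr)$ from the exact count, compare with $\epsilon_h^C=(14h+2)^{-2C}$ so that geometric decay beats the polynomial factor, and then contradict \eqref{eq:removal-bound} along $A_h$ using $\dist_{H_4}(A_h)=\epsilon_h$. The only cosmetic difference is that the paper first replaces $C$ by a natural number $k>C$ via $0<\epsilon_h\le1$ (and discards the $(14h+2)^{-8}$ factor), whereas you treat real $C$ directly through the exponent $2C-8$.
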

\begin{proof}
The theorem gives, with $d_h=14h+2$,
\[
 \delta_h=\frac{5/2-2^{-h}}{d_h^8 4^h}\le\frac{3}{4^h}.
\]
Since $d_h\to\infty$ and $n_h=d_h2^h$, we have $n_h\to\infty$ and
$\epsilon_h\to0$.
Fix $C\in\mathbb R$ and choose a natural number $k>C$. Since
$0<\epsilon_h\le1$, we have
\[
 0\le\frac{\delta_h}{\epsilon_h^C}
 \le\frac{\delta_h}{\epsilon_h^k}
 \le\frac{3d_h^{2k}}{4^h}\longrightarrow0.
\]
The last limit is the standard comparison of polynomial growth with
geometric growth. Given $c,C>0$, the inequality
$\delta_h<c\epsilon_h^C$ therefore holds for all sufficiently large $h$,
which contradicts \eqref{eq:removal-bound} along the family $A_h$.
\end{proof}

The proof separates the unrestricted editing problem from the enumeration
of copies. Section~\ref{sec:host} defines the block-constant host and its
canonical repair. Section~\ref{sec:distance} constructs $m^2$ protected
selections and proves the lower bound for arbitrary edits. The terminal
classification and repaired freeness appear in
Section~\ref{sec:classification}; Section~\ref{sec:count} then closes the
main theorem by counting the physical copies with their block
multiplicities. The exact transversal number and the quantitative density
profile follow in subsequent sections. Structural details, complementary
proofs, arithmetic certificates and verification instructions are collected
in the appendices. The tree-propagation mechanism follows the general
approach of \cite{OAI}; the forcing frame and the exact classifications
here are specific to $H_4$.

\section{The host construction}\label{sec:host}
Fix $h\ge1$ and write $m=2^h$. For each internal level $0\le i<h$ and
node $0\le p<2^i$, introduce row blocks $R_i^\pm(p)$ and column blocks
$C_i^\pm(p)$, each of size $2^{h-i}$. At level $h$ the two signs are
identified: a leaf label $p<m$ determines one row position $R_h(p)$ and
one column position $C_h(p)$. Finally, add an $m$-element dummy block on
each axis, denoted by $R_*$ and $C_*$.

The variable blocks form binary trees. On the row axis the minus block at a
node precedes the two child subtrees and the plus block follows them; on
the column axis these two positions are reversed. In both trees the left
child subtree precedes the right child subtree. The row dummy block follows
the tree, whereas the column dummy block precedes it.

For each $1\le i\le h$ we add six marker modes
\[
 V_i^+,\ V_i^-,\ W_{i,0}^+,\ W_{i,1}^+,
 \ W_{i,0}^-,\ W_{i,1}^-.
\]
Each mode $t$ has marker row blocks $A_1(t),A_2(t)$ and marker column
blocks $B_1(t),B_2(t)$, all of size $m$. The two marker
groups interact with the following non-marker classes:
\begin{center}\small
\begin{tabular}{c|cc|cc}\toprule
Mode&Row role 1&Row role 2&Column role 1&Column role 2\\\midrule
$V_i^+$&$R_i^+$&$R_{i-1}^+$&$C_*$&$C_{i-1}^+$\\
$V_i^-$&$R_{i-1}^-$&$R_i^-$&$C_*$&$C_{i-1}^-$\\
$W_{i,b}^+$&$R_{i,b}^+$&$R_*$&$C_{i-1}^+$&$C_{i,b}^+$\\
$W_{i,b}^-$&$R_{i,b}^-$&$R_*$&$C_{i,b}^-$&$C_{i-1}^-$\\\bottomrule
\end{tabular}
\end{center}
Here $R_i^\pm$ and $C_i^\pm$ denote the unions of all blocks at the
indicated level and sign; a subscript $b$ restricts the union to nodes of
parity $b$. At level $h$, a leaf
belongs to the class determined by its parity for either sign.

All marker blocks precede the variable and dummy blocks. The marker prefix
starts with
\[
 W_{1,0}^-,W_{1,1}^-,\ldots,W_{h,0}^-,W_{h,1}^-;
\]
the remaining modes occur in increasing level order as
$V_i^+,V_i^-,W_{i,0}^+,W_{i,1}^+$. Within each mode, group 1 precedes
group 2. These orders are used separately on the row and column axes.

The entry rule is constant on every product of two blocks. Two marker
blocks have entry one exactly when their modes agree and their group
indices differ. The support of a group-$u$ marker row on non-marker columns
is exactly column role $u$ of its mode; the support of a group-$u$ marker
column on non-marker rows is exactly row role $u$. Every dummy--non-marker
entry is one. The remaining
variable--variable entries are
\begin{align}\label{eq:values}
 A_h(R_i^+(p),C_i^+(q))&=\ind[p\le q] &&(i<h),\nonumber\\
 A_h(R_i^-(p),C_i^-(q))&=\ind[p<q] &&(i<h),\nonumber\\
 A_h(R_h(p),C_h(q))&=\ind[p\le q],\\
 A_h(R_i^+(p),C_{i-1}^+(q))&=\ind[\lfloor p/2\rfloor\le q],\nonumber\\
 A_h(R_i^-(p),C_{i-1}^-(q))&=\ind[\lfloor p/2\rfloor<q].\nonumber
\end{align}
In the last two identities $1\le i\le h$, and a leaf row at $i=h$ is
treated as either signed row. The indicator $\ind[S]$ is one when $S$
holds and zero otherwise.
All other variable--variable entries are zero.

The canonical repair $B_h$ changes precisely the entries in
\[
 R_*\times\{C_h(p):0\le p<m\}
\]
from one to zero and leaves all other entries unchanged.

\begin{lemma}[Physical size and order]\label{lem:size}
Each axis contains exactly $(14h+2)m$ positions. After ordering the
positions within every block by their offsets, the resulting block order
extends to a strict total order. Consequently each axis is order-isomorphic
to $\{0,\ldots,n_h-1\}$ with its usual order.
\end{lemma}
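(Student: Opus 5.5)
The plan is to count block sizes axis by axis and then exhibit the order as a concatenation of linearly ordered pieces.

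\emph{Counting.} On the row axis the contributions are as follows.
\begin{itemize}
\item Internal variable blocks: for each level $0\le i<h$ there are $2^i$ nodes, each with two blocks $R_i^\pm(p)$ of size $2^{h-i}$. This gives $2\cdot 2^i\cdot 2^{h-i}=2m$ per level, so $2hm$ in total.
\item Leaves: they contribute $m$ single positions $R_h(p)$.
\item Dummy block: $R_*$ contributes $m$.
\item Marker blocks: there are $6h$ modes, each with two row blocks of size $m$, giving $12hm$.
\end{itemize}
Summing, $2hm+m+m+12hm=(14h+2)m=n_h$. The column axis is identical by symmetry of the construction, since $C_i^\pm(p)$, $C_h(p)$, $C_*$ and $B_1(t),B_2(t)$ have the same sizes.

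\emph{Ordering.} I will define the axis order as a concatenation of three segments: the marker prefix, then the variable trees together with the dummy block. On rows this is marker prefix, then tree, then $R_*$; on columns it is marker prefix, then $C_*$, then tree.

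\begin{itemize}
\item The marker prefix is a finite list of modes: the $W^-$ modes in the stated order, followed by the remaining modes in level order $V_i^+,V_i^-,W_{i,0}^+,W_{i,1}^+$. Each of the $6h$ modes appears exactly once, and within each mode group 1 precedes group 2. This is therefore a linear order on the $12h$ marker blocks.
\item For the tree, I define recursively a linear order $T(i,p)$ on the blocks of the subtree at node $(i,p)$.
  \begin{itemize}
  \item At a leaf, $T(h,p)$ consists of the single block $R_h(p)$.
  \item On rows, for $i<h$, set $T(i,p)=R_i^-(p),\,T(i+1,2p),\,T(i+1,2p+1),\,R_i^+(p)$.
  \item On columns, swap the roles of $C_i^+$ and $C_i^-$ in this formula.
  \end{itemize}
  By induction on $h-i$, each $T(i,p)$ lists every block of its subtree exactly once: the two child subtrees are disjoint and the node's own blocks are distinct from them. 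Hence $T(0,0)$ is a strict total order on all variable blocks.
\end{itemize}
Concatenating finitely many disjoint linearly ordered lists of blocks yields a strict total order on blocks. Ordering positions inside each block by offset and comparing positions in different blocks by block order (the lexicographic order) then gives a strict total order on positions. Every position lies in exactly one block, so this order is total and irreflexive.

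\emph{Conclusion.} A finite strict total order on $n_h$ elements is order-isomorphic to $\{0,\dots,n_h-1\}$ by ranking elements by the number of predecessors.

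The only step needing care is checking that the marker prefix description lists every mode exactly once. The listed $W^-$ sequence covers the $2h$ modes $W^-_{i,b}$, and the remaining sequence covers the other $4h$ modes, with no overlap. I expect this to be bookkeeping rather than a genuine obstacle.
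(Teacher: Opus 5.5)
Your proposal is correct and follows essentially the paper's proof: the same count $12hm+2hm+m+m=(14h+2)m$, then ordering offsets within blocks and comparing positions in different blocks by block order. Your recursive lists $T(i,p)$ just spell out the total block order that the paper reads directly from the construction; on the column axis the leaf list should read $C_h(p)$ rather than $R_h(p)$, which is a harmless slip.
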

\begin{proof}
The marker blocks contribute $6h\cdot2m=12hm$ positions. At each internal
depth, the $2^i$ nodes on either sign have multiplicity $2^{h-i}$, hence
contribute $m$ positions; the two signs contribute $2hm$ in total. The
leaves and the dummy block contribute $m$ positions each. This gives
$(14h+2)m$. Ordering offsets increasingly inside each block gives the
required total order and the order-preserving bijections. We use these
bijections to regard $A_h$ and $B_h$ as numerically indexed matrices.
\end{proof}

\begin{lemma}[Block quotient and multiplicity]\label{lem:quotient}
An induced ordered copy of $H_4$ in either host uses four distinct row
blocks and four distinct column blocks. The ordered block tuples whose
entry table equals $H_4$ are precisely the block projections of physical
copies. A tuple has exactly the product of its eight block sizes as
physical realizations.
\end{lemma}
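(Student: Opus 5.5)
The plan rests on one observation: both $A_h$ and $B_h$ are block-constant. The entry rule in Section~\ref{sec:host} assigns a single value to every product of a row block and a column block. This holds for $B_h$ as well, because the repair changes the whole rectangles $R_*\times C_h(p)$, each of which is a product of two blocks (the leaf columns $C_h(p)$ form single blocks).

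By Lemma~\ref{lem:size}, the physical order on each axis refines the block order. So every physical position lies in exactly one block, and the positions of one block are consecutive, ordered by offset. Two positions in the same row block therefore have identical rows, restricted to any columns, in the host.

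\textbf{Distinct blocks.} Suppose an induced copy used two of its rows $r<r'$ from the same row block. Then rows $r$ and $r'$ of the induced $4\times4$ submatrix would coincide, since they are the same row of the host restricted to the chosen columns. But the rows of $H_4$ are pairwise distinct, as noted after \eqref{eq:pattern}, a contradiction. The same argument with the pairwise distinct columns of $H_4$ shows that the four column blocks are distinct.

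\textbf{Block tuples.} Since blocks are intervals in the total order, increasing positions in distinct blocks give a strictly increasing tuple of blocks. By block-constancy, the induced submatrix equals the block entry table. Hence the block projection of a physical copy is an increasing block tuple with entry table $H_4$.

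Conversely, take a strictly increasing block tuple with table $H_4$ and choose any position in each of the eight blocks. The chosen rows are increasing, because distinct blocks occupy disjoint consecutive intervals in block order, and likewise for the columns. By block-constancy the induced submatrix equals the table, namely $H_4$. So every such tuple arises as a projection, and its physical realizations are exactly the independent choices of one position per block. This yields the product of the eight block sizes, and distinct choices give distinct copies.

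The only point needing care is the interpretation of ``ordered block tuples.'' They must be increasing in the block order, which is guaranteed because each block is an interval (Lemma~\ref{lem:size}). I must also check that block-constancy survives the repair, which holds because leaf column blocks are singletons in label. I expect no real obstacle beyond stating these two facts cleanly.
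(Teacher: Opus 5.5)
Your proof is correct and matches the paper's argument: block-constancy together with the pairwise distinct rows and columns of $H_4$ forces distinct blocks, the interval structure from Lemma~\ref{lem:size} preserves order in both directions, and independent choices of one position per block give the product count. You also check explicitly that $B_h$ stays block-constant, because the repair changes whole products $R_*\times C_h(p)$. The paper leaves that detail implicit, but it is exactly what the ``either host'' clause needs.
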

\begin{proof}
Two rows from one block agree on every column, so their signatures in a
copy would be equal. The distinct rows of $H_4$ exclude this possibility;
the distinct columns give the same conclusion on the other axis. Each
block is an interval in its axis order. Conversely, choosing one position
from each block of an ordered tuple preserves its order and all entries,
because the entry rules are block-constant. The choices are independent
and have the stated product as their number.
\end{proof}

\section{The edit-distance lower bound}\label{sec:distance}
A \emph{marker frame} is an ordered choice of four rows and four columns
whose twelve entries outside the lower-right $2\times2$ block agree with
$H_4$. Its lower-right block is a copy of $P$ exactly when the frame is a
copy of $H_4$.

\subsection{Protected selections}
Index a family of selections by pairs $(a,b)\in\{0,\ldots,m-1\}^2$.
Set $z=a\mathbin{\oplus}b$, with $\oplus$ denoting bitwise XOR, and at
depth $d\le h$ select the node
\[
 p_d=\left\lfloor\frac{z}{2^{h-d}}\right\rfloor.
\]
Then $p_0=0$, $p_h=z$ and $p_{d-1}=\lfloor p_d/2\rfloor$ for
$1\le d\le h$.
In a signed row block on this path choose the offset
$a\bmod2^{h-d}$; in the corresponding signed column block choose
$b\bmod2^{h-d}$. In each marker block and in the dummy block use offset
$a$ on rows and offset $b$ on columns. At depth $h$, the two signed paths
meet at the same leaf on each axis.

Call a cell in the product of the selected rows and columns protected
when at least one endpoint is a marker or dummy position, or when both
endpoints are root variable positions. Thus every
selected frame contributes its twelve prescribed entries, together with
the dummy incidences and root entries used in the propagation argument.

\begin{lemma}[Injectivity of protection]\label{lem:injective}
Every physical cell is protected by at most one of the $m^2$ selections.
\end{lemma}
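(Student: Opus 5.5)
The plan is to show that a protected cell determines the pair $(a,b)$ that selected it. A protected cell has the form $(r,c)$, where $r$ is a selected row and $c$ a selected column of the same selection. By definition, at least one endpoint is a marker or dummy position, or both are root variable positions. We therefore split into three cases according to which endpoint carries the identifying information, and in each case recover $a$ from the row and $b$ from the column.

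\textbf{Case 1: the row endpoint is a marker or dummy position.} Every marker block and the dummy block $R_*$ has size $m$, and a selection uses offset $a$ there. So the physical row determines $a$ directly. If the column endpoint is also a marker or dummy position, it determines $b$ in the same way. Otherwise the column is a variable position lying on the selection's signed path. Given $a$, we still need $b$. Suppose the column lies in $C_d^\pm(p_d)$ with offset $b\bmod 2^{h-d}$. Since $p_d=\lfloor (a\oplus b)/2^{h-d}\rfloor$, the node $p_d$ together with $a$ determines the top $d$ bits of $b$. These are the bits of $b$ at positions $\ge h-d$, because XOR acts bitwise and $\lfloor a/2^{h-d}\rfloor$ is known. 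The offset supplies the bottom $h-d$ bits, so $b$ is recovered. The leaf level $d=h$ (offset empty, $p_h=a\oplus b$) and the root level $d=0$ (offset $b$ itself) are special cases of the same computation.

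\textbf{Case 2: the column endpoint is a marker or dummy position.} This is symmetric to Case 1: the column gives $b$, and the row's node index and offset $a\bmod 2^{h-d}$ reconstruct $a$.

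\textbf{Case 3: both endpoints are root variable positions.} The root blocks $R_0^\pm(0)$ and $C_0^\pm(0)$ have size $2^h=m$, and the selections use offsets $a$ and $b$ there. So the row yields $a$ and the column yields $b$.

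Thus in every case the physical cell, together with the knowledge of which blocks its endpoints lie in, determines $(a,b)$ uniquely. Hence at most one selection protects it.

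The step needing care is the bit reconstruction in Cases 1 and 2. I expect this to be the main obstacle, though it is mild. One must check that the pair (node label $p_d$, offset) is read off unambiguously from a physical position: each block $C_d^\pm(p)$ is a distinct block, so the position determines $d$, the sign and $p$. It also matters that at level $h$ the identification of the two signs does not create ambiguity, since the leaf still determines $p_h=z$. Then $b=\bigl(2^{h-d}p_d \oplus (a-(a\bmod 2^{h-d}))\bigr)+\text{offset}$, with the high part computed via XOR against the high bits of $a$.
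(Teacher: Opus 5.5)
Your proposal is correct and follows essentially the same route as the paper. A marker or dummy offset gives one coordinate directly; the identity $\lfloor a/2^{h-d}\rfloor=p_d\oplus\lfloor b/2^{h-d}\rfloor$ (XOR commuting with truncation), together with the variable offset, recovers the other; and the root--root case is read off from the two offsets in the size-$m$ root blocks. Your case split by which endpoint is a marker or dummy position, with the leaf and root levels treated as instances of the general depth-$d$ computation, is only a cosmetic reorganization of the paper's cases.
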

\begin{proof}
If both endpoints are marker or dummy positions, their offsets recover
$(a,b)$. Suppose instead that the row endpoint is a variable position at
depth $d$ and the column endpoint is a marker or dummy position. The
column offset recovers $b$. The row node recovers
$\lfloor z/2^{h-d}\rfloor$, while the row offset recovers
$a\bmod2^{h-d}$. Since XOR commutes with truncation of the lower $h-d$
bits,
\[
 \left\lfloor\frac a{2^{h-d}}\right\rfloor
 =p_d\mathbin{\oplus}
 \left\lfloor\frac b{2^{h-d}}\right\rfloor.
\]
The quotient and remainder therefore determine $a$. When the row is a
marker or dummy and the column is variable, the row offset gives $a$,
the column node gives $p_d$, and the identity
$\lfloor b/2^{h-d}\rfloor=p_d\mathbin{\oplus}\lfloor a/2^{h-d}\rfloor$
recovers the quotient of $b$; its column offset supplies the remainder.
A protected cell with two variable
endpoints occurs at the roots, where its two offsets directly recover
$(a,b)$. At a leaf $d=h$, the offset is zero and $p_h=z$, so the mixed
endpoint cases reduce to $a=z\mathbin{\oplus}b$ or
$b=z\mathbin{\oplus}a$, respectively.
\end{proof}

\begin{corollary}[An untouched protected selection]\label{cor:avoid}
If fewer than $m^2$ cells are changed, some selection has all its protected
cells unchanged.
\end{corollary}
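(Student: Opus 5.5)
This is a pigeonhole argument built on Lemma~\ref{lem:injective}. Let $E$ be the set of changed cells, with $|E|<m^2$. For each changed cell $e\in E$, Lemma~\ref{lem:injective} gives at most one selection $(a,b)$ that protects $e$. Define a partial map from $E$ to $\{0,\ldots,m-1\}^2$ that sends each changed cell to the unique selection protecting it, when such a selection exists, and is undefined otherwise.

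The image of this map has at most $|E|<m^2$ elements. The index set of selections has exactly $m^2$ elements, so some pair $(a,b)$ lies outside the image. Then no changed cell is protected by $(a,b)$. Equivalently, every cell protected by the selection $(a,b)$ is unchanged, which is the claim.

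One point needs checking: the statement should be read with cells as physical cells of the $n_h\times n_h$ matrix, which is exactly the setting of Lemma~\ref{lem:injective}. Protected cells of a given selection are defined as physical cells in the product of its selected rows and columns, so the notions match.

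I expect no real obstacle, since all of the difficulty sits in the injectivity lemma, which I take as given. The only care needed is to count selections as the $m^2$ distinct pairs, not as distinct sets of positions. The argument does not need two different pairs to yield different position sets, because injectivity is stated per pair.
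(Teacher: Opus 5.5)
Your proposal is correct and matches the paper's proof: by Lemma~\ref{lem:injective} each changed cell lies in the protected set of at most one selection, so fewer than $m^2$ changed cells can affect fewer than $m^2$ of the $m^2$ indexed selections, leaving one untouched. Your partial-map formulation and the remark about counting selections as index pairs just make the paper's one-line pigeonhole step explicit.
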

\begin{proof}
By Lemma~\ref{lem:injective}, each changed cell affects at most one
selection. Fewer than $m^2$ changed cells can therefore affect fewer than
$m^2$ selections.
\end{proof}

\subsection{Propagation along a protected path}
Fix a selection whose protected cells remain unchanged, and let $D$ be an
$H_4$-free matrix obtained from $A_h$ by editing cells. Let $x_i$ and $y_i$
be the selected plus and minus diagonal entries at depth $i$, respectively.
Thus $x_i=D(r_i^+,c_i^+)$ and $y_i=D(r_i^-,c_i^-)$, where lower-case
letters denote the selected positions in the corresponding blocks. The protected
root entries give $x_0=1$ and $y_0=0$. Put
$\alpha_i=D(r_i^+,c_{i-1}^+)$ and
$\beta_i=D(r_i^-,c_{i-1}^-)$. The four selected marker frames have bodies
\begin{center}
\begin{tabular}{c|c|l}\toprule
Mode&Body&Consequence of $H_4$-freeness\\\midrule
$V_i^+$&$\left(\begin{smallmatrix}1&\alpha_i\\1&x_{i-1}\end{smallmatrix}\right)$
 &$x_{i-1}=1\ \Longrightarrow\ \alpha_i=1$\\[7pt]
$V_i^-$&$\left(\begin{smallmatrix}1&y_{i-1}\\1&\beta_i\end{smallmatrix}\right)$
 &$y_{i-1}=0\ \Longrightarrow\ \beta_i=0$\\[7pt]
$W_{i,p_i\bmod2}^+$&$\left(\begin{smallmatrix}\alpha_i&x_i\\1&1\end{smallmatrix}\right)$
 &$\alpha_i=1\ \Longrightarrow\ x_i=1$\\[7pt]
$W_{i,p_i\bmod2}^-$&$\left(\begin{smallmatrix}y_i&\beta_i\\1&1\end{smallmatrix}\right)$
 &$\beta_i=0\ \Longrightarrow\ y_i=0$\\\bottomrule
\end{tabular}
\end{center}
The ordered non-marker rows and columns of these frames are
\begin{center}\small
\begin{tabular}{c|cc|cc}\toprule
Mode&First row&Second row&First column&Second column\\\midrule
$V_i^+$&$R_i^+(p_i)$&$R_{i-1}^+(p_{i-1})$&$C_*$&$C_{i-1}^+(p_{i-1})$\\
$V_i^-$&$R_{i-1}^-(p_{i-1})$&$R_i^-(p_i)$&$C_*$&$C_{i-1}^-(p_{i-1})$\\
$W_{i,p_i\bmod2}^+$&$R_i^+(p_i)$&$R_*$&$C_{i-1}^+(p_{i-1})$&$C_i^+(p_i)$\\
$W_{i,p_i\bmod2}^-$&$R_i^-(p_i)$&$R_*$&$C_i^-(p_i)$&$C_{i-1}^-(p_{i-1})$\\\bottomrule
\end{tabular}
\end{center}
Each pair is preceded by the two marker groups of its mode. In the row
tree a plus descendant precedes its plus ancestor, whereas a minus ancestor
precedes its minus descendant. In the column tree these relations are
reversed. The column dummy precedes all variables and the row dummy follows
them. These four observations give the strict orders in every row of the
table, including the terminal level. The twelve marker entries and all
displayed dummy entries are protected. Induction on $i$
now gives
\[
 x_{i-1}=1\Longrightarrow\alpha_i=1\Longrightarrow x_i=1,
 \qquad
 y_{i-1}=0\Longrightarrow\beta_i=0\Longrightarrow y_i=0.
\]
Starting from $x_0=1$ and $y_0=0$, these implications give $x_i=1$ and
$y_i=0$ for every $i\le h$. At depth $h$ the signed row positions are
identified, and so are the signed column positions. Thus $x_h$ and $y_h$
are the same physical cell, which would have to equal both $1$ and $0$.

\begin{proposition}[Arbitrary-edit lower bound]\label{prop:lower}
Every binary $H_4$-free matrix obtained from $A_h$ by cell edits differs
from $A_h$ in at least $m^2$ cells.
\end{proposition}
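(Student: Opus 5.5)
The plan is a proof by contradiction that assembles Corollary~\ref{cor:avoid} with the propagation argument above. Suppose $D$ is an $H_4$-free binary matrix differing from $A_h$ in fewer than $m^2$ cells. By Corollary~\ref{cor:avoid}, which rests on the injectivity of Lemma~\ref{lem:injective}, some selection indexed by $(a,b)$ has all of its protected cells unchanged. We fix that selection and its node path $p_0,\dots,p_h$.

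Next we check that the protected cells of this selection really carry the values the propagation uses. There are three kinds of cells to verify. First, the twelve marker entries of each of the four frames $V_i^\pm$ and $W_{i,p_i\bmod 2}^\pm$ agree with $H_4$ outside the lower-right block. This follows from the entry rule: marker--marker entries are one exactly when the groups differ, giving $J_2-I_2$, and group-$u$ support lies on role $u$, giving the two $I_2$ blocks. Second, the dummy incidences in the bodies equal one, since every dummy--non-marker entry is one. Third, the root entries satisfy $x_0=A_h(R_0^+(0),C_0^+(0))=\ind[0\le 0]=1$ and $y_0=\ind[0<0]=0$. Each selected row or column on the path lies in the role class required by the mode table. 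For the $W$ modes this uses that $p_i$ has parity $p_i\bmod 2$; at the leaf level it uses that a leaf belongs to either sign class by parity. Hence the marker frame entries are exactly as in $H_4$. Because these cells are untouched, the same values hold in $D$.

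Then we verify the strict orders listed in the table of non-marker rows and columns. This uses the tree layout: a plus descendant precedes its plus ancestor in the row tree, a minus ancestor precedes its minus descendant, the reverse holds on columns, the column dummy comes first, and the row dummy comes last. Markers precede all of these. Each selected frame in $D$ therefore has lower-right body as tabulated, and $H_4$-freeness forbids the body from equalling $P$. Since the first column of each body is either $(1,1)^T$ or has a protected bottom row $(1,1)$, the body equals $P$ exactly in the excluded configuration. This yields the four implications, and induction on $i$ gives $x_h=1$ and $y_h=0$.

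Finally, at depth $h$ the plus and minus blocks are identified, and the selected offsets there are $a\bmod 1=b\bmod 1=0$. So $x_h$ and $y_h$ are the single cell $D(R_h(p_h),C_h(p_h))$, which gives the contradiction. The main obstacle is the order and role bookkeeping at the terminal level $i=h$, where leaf positions serve simultaneously as both signed blocks. There one must confirm that the frames for $V_h^\pm$ and $W_{h,\cdot}^\pm$ still have strictly increasing rows and columns, and that the leaf row lies in both $R_{h,b}^+$ and $R_{h,b}^-$. I would handle this by treating the leaf as sitting in the position of the signed child subtree, so that it lies between its parent's minus and plus blocks on both axes. This makes the ancestor/descendant comparisons hold verbatim.
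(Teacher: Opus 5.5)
Your proposal is correct and takes the same approach as the paper. Corollary~\ref{cor:avoid} supplies a selection whose protected cells are all untouched, and the four forcing frames then propagate $x_0=1$ and $y_0=0$ to the single leaf cell $D(R_h(p_h),C_h(p_h))$, a contradiction; your checks of frame entries, strict orders at the terminal level, and parity roles are what the paper records in its propagation subsection and in Lemma~\ref{lem:framesorder}.
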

\begin{proof}
If fewer than $m^2$ cells were changed, Corollary~\ref{cor:avoid} would
provide an untouched protected selection. The propagation argument above
would then produce a contradiction.
\end{proof}

\section{Terminal classification and canonical repair}\label{sec:classification}
For a leaf label $p<m$, put $t=W^-_{h,p\bmod2}$. The two terminal forms use the common row
tuple
\[
 A_1(t),\quad A_2(t),\quad R_h(p),\quad R_*.
\]
Their column tuples are
\begin{align*}
 \mathrm{I}:\quad&B_1(t),\ B_2(t),\ C_h(p),\ C^-_{h-1}(\lfloor p/2\rfloor),\\
 \mathrm{II}:\quad&B_1(t),\ C_h(q),\ C_h(p),
 C^-_{h-1}(\lfloor p/2\rfloor),
 \quad q<p,\ q\equiv p\pmod2.
\end{align*}
The row tuple is ordered because the marker groups precede the leaf and
the dummy row follows it. The leaf column $C_h(p)$ precedes its minus
parent in the column tree.
For form II, the conditions $q<p$ and $q\equiv p\pmod2$ imply
$\lfloor q/2\rfloor<\lfloor p/2\rfloor$, so $C_h(q)$ also precedes
that parent and precedes $C_h(p)$. Thus both displayed tuples are ordered.
The entry table gives the four row vectors
\[
 (0,1,1,0),\quad(1,0,0,1),\quad(1,0,1,0),\quad(0,1,1,1).
\]
In form II, the same parity makes $C_h(q)$ a column-role-1 leaf for
$t$, and $q<p$ makes its entry against $R_h(p)$ zero. These give
the same second-column signature as the second marker in form I.

\begin{lemma}[Exhaustive terminal classification]\label{lem:classification}
A valid ordered block tuple forms an induced ordered copy of $H_4$ in
$A_h$ if and only if it has terminal form I or terminal form II.
\end{lemma}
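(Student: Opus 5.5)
The "if" direction has already been checked in the discussion preceding the lemma: both terminal forms are ordered, and their entry tables reproduce $H_4$ row by row. So the work is in the "only if" direction. By Lemma~\ref{lem:quotient} we may work entirely with block tuples $(X_1,X_2,X_3,X_4;Y_1,Y_2,Y_3,Y_4)$ of distinct row blocks and distinct column blocks, ordered on each axis, whose $4\times4$ table of block entries equals $H_4$. The plan is to pin down the blocks one at a time, using the row and column weight profile of $H_4$ together with the order constraints.

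\textbf{Step 1: locating the marker prefix.} Every marker block comes before all variable and dummy blocks. Marker--marker entries are one only for the same mode with opposite groups, so the marker part of the pattern is a disjoint union of complete bipartite pieces $K_{1,1}$ or $J_2$. The upper-left block of $H_4$ is $J_2-I_2$, which is exactly the interaction between the two groups of a single mode. We will show by case analysis that rows $1,2$ and columns $1,2$ are markers and the rest are not, except that column $2$ may be non-marker.
\begin{itemize}[leftmargin=*]
\item Rows $3,4$ cannot both be markers. Their $2\times2$ body $P$ would then force a same-mode group structure that clashes with the zero in entry $(3,2)$ and with the ones in row $4$.
\item The first column cannot be a non-marker. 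Its entries are $(0,1,1,0)$, and dummy columns are all one on non-marker rows, so this case needs the variable-entry rules \eqref{eq:values} together with ordering.
\end{itemize}
I expect a short but fiddly enumeration here. The most efficient approach is to list, for each block class, its possible entry signatures against the other classes, and match these against the rows and columns of $H_4$, which are pairwise distinct.

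\textbf{Step 2: identifying the mode and the non-marker rows.} Assume $X_1=A_1(t)$, $X_2=A_2(t)$ and $Y_1=B_1(t)$. Then $X_3$ must lie in row role~1 of $t$ and $X_4$ in row role~2, because marker column $B_1(t)$ is one exactly on row role~1 ($X_3$ has a one there and $X_4$ does not). If $Y_2$ is a marker it must be $B_2(t)$, which forces $X_4$ into row role~2. Columns $3,4$ must then lie respectively in column role~2 for rows in group~1 and column role~1 for rows in group~2.

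Next we run through the six mode types. For each one we check whether the required non-marker $2\times2$ body $P=\left(\begin{smallmatrix}1&0\\1&1\end{smallmatrix}\right)$ can occur with the correct order on both axes. The propagation table of Section~\ref{sec:distance} already shows which bodies arise, namely $\left(\begin{smallmatrix}1&\alpha\\1&x\end{smallmatrix}\right)$ and similar, so we need the actual values given by \eqref{eq:values}. For example, for $V_i^+$ the body is $P$ only if $\alpha=0$ while $x_{i-1}=1$, and monotonicity of $\lfloor p/2\rfloor\le q$ together with the order of the plus tree should rule this out. The analogous checks should eliminate $V^\pm$ and $W^+$ entirely, as well as $W^-_{i,b}$ for $i<h$, because the internal minus comparisons are strict, $\ind[p<q]$. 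At a leaf, however, the rule is $\ind[p\le q]$, and this mismatch is exactly what allows $t=W^-_{h,p\bmod2}$, $X_3=R_h(p)$, $X_4=R_*$, $Y_3=C_h(p)$ and $Y_4=C^-_{h-1}(\lfloor p/2\rfloor)$.

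\textbf{Step 3: the second column.} If $Y_2$ is a marker we obtain form~I. Otherwise $Y_2$ must be a non-marker column with signature $(1,0,0,1)$. Such a column is one on $A_1(t)$, so it lies in column role~1 of $t$, that is, it is a leaf $C_h(q)$ of parity $p\bmod2$. It is zero on $R_h(p)$, so $q<p$, and it must precede $C_h(p)$; this is consistent with those conditions and yields form~II.

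\textbf{Main obstacle.} The hardest part will be the exhaustive elimination in Steps~1--2: all positional cases of markers and non-markers, combined with all mode types and levels. I would organize it by the column signature of $Y_4$ and by the level gap permitted by \eqref{eq:values}, and fall back on the arithmetic certificate if a uniform argument becomes too long.
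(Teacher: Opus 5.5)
\textbf{There is a genuine gap: the ``only if'' direction, which is the whole content of the lemma, is not actually proved.} Your ``if'' direction and your Step~3 are fine.

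The paper obtains the ``only if'' direction in three moves. It encodes the block tuple as a witness of $\mathcal C_h^0$ (Lemma~\ref{lem:encoding}). It then applies the certified implication $\mathcal C_h^0\Rightarrow T_h$ of Lemma~\ref{lem:arithmetic}. Finally it decodes the parity $j=4+(p\bmod2)$ from $(H_4)_{31}=1$. You replace this with a hand enumeration that is announced rather than carried out. The Step~1 bullets are not arguments: the body $P$ sits in columns $3,4$, which need not be markers, and the non-marker first column is explicitly deferred. Step~2 ends in ``should rule this out''.

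More seriously, Step~2 starts from $X_1=A_1(t)$, $X_2=A_2(t)$, $Y_1=B_1(t)$, and nothing you proved yields this. The marker entries give only $Y_1=B_g(t)$ and $X_2=A_{3-g}(t)$. When $Y_2$ is a marker, the agreement of the row and column mode orders does force a single mode. When $Y_2$ is not a marker, however, nothing ties $X_1$ to $t$. Excluding such mixed-mode frames is exactly where the specific inter-mode order of the prefix is indispensable ($W^-$ modes first, by level, parity $0$ before $1$).

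Here is a concrete instance. Take $h\ge2$, $p$ even with $2\le p\le m-2$, and $s<p$ odd. Use rows $A_1(W^-_{h,1}),A_2(W^-_{h,0}),R_h(p),R_*$ and columns $B_1(W^-_{h,0}),C_h(s),C_h(p+1),C^-_{h-1}(p/2)$. This tuple has entry table exactly $H_4$ and increasing columns. It fails to be a copy only because $A_2(W^-_{h,0})$ precedes $A_1(W^-_{h,1})$ on the row axis. Your argument uses only that markers precede non-markers and that group~1 precedes group~2 within a mode. It would therefore equally ``prove'' the lemma for the host with $W^-_{h,0}$ and $W^-_{h,1}$ swapped, where the lemma is false.

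There are also local errors in Step~2.
\begin{itemize}
\item The roles of $Y_3$ and $Y_4$ are reversed. $A_1(t)$ is supported on column role~1, so $Y_3$ lies in column role~1 and $Y_4$ in column role~2.
\item A zero of $B_1(t)$ at $X_4$ gives only $X_4\notin$ row role~1. So in the form~II branch you never establish $X_4=R_*$.
\item Your $V_i^+$ elimination uses $X_4\in R^+_{i-1}$, which is unavailable in that branch.
\end{itemize}

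These local points are repairable within a single mode. A strict inequality between two labels of the same parity forces a strict inequality of their parents, so the parity restriction does the work for $W^\pm_{i,b}$. For $V^\pm$ with a non-marker $Y_2$, column role~1 is $C_*$ alone, so $Y_2=Y_3$, a contradiction. To get $X_4=R_*$: the only variable rows that are $1$ on $C_h(p)$ are leaf rows, and a leaf row after $R_h(p)$ is $0$ on $C_h(q)$ for $q<p$.

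A complete hand proof along your lines must still do two more things. It must settle Step~1, and it must rule out every $g=2$ and mixed-mode frame using the canonical marker order. That is precisely the part you defer to the arithmetic certificate, which is the paper's proof.
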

\begin{proof}
Let $r_1<\cdots<r_4$ and $c_1<\cdots<c_4$ be the selected block
types. By Lemma~\ref{lem:encoding}, their integer encodings satisfy the
original copy predicate $\mathcal C_h^0$. The first implication of
Lemma~\ref{lem:arithmetic} places them in the integer terminal family.

The terminal predicate identifies the first three rows as two level-$h$
$W^-$ marker rows followed by a terminal leaf, the fourth row as the dummy,
and the third column as the same terminal leaf. The fourth column is the
minus parent of that leaf. The entry between the leaf row and the first
marker column is $(H_4)_{31}=1$. By the role table it forces the marker
parity to equal $p\bmod2$. The second column is
then either the second marker of the same mode or an earlier terminal leaf
with the same parity. Physical block validity restores $0\le p<m$;
in the second alternative it also gives $0\le q<p$. These are precisely
forms I and II. The leaf row determines $p$, and the kind of the second
column distinguishes the alternatives. In form II that column also
determines $q$, proving uniqueness of the parameters.

The ordered tuples and their entries were checked above, giving the
converse.
\end{proof}

\begin{lemma}[Freeness of the canonical repair]\label{lem:repair}
For every $h\ge1$, the repaired matrix $B_h$ is $H_4$-free.
\end{lemma}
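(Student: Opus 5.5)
We would prove Lemma~\ref{lem:repair} by comparing copies in $B_h$ with copies in $A_h$. The two matrices differ only on $R_*\times\{C_h(p)\}$, and there $B_h$ has a zero where $A_h$ has a one. Take an ordered block tuple $(r_1,\dots,r_4;c_1,\dots,c_4)$ whose entry table in $B_h$ equals $H_4$. Lemma~\ref{lem:quotient} applies to $B_h$ as well, because $B_h$ is again block-constant with the same blocks. It therefore suffices to rule out block tuples.

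We split into two cases, according to whether the tuple uses a changed cell. In the first case, no selected pair $(r_j,c_l)$ lies in $R_*\times C_h(\cdot)$. Then the entry table is the same in $A_h$ and $B_h$, so the tuple is a copy in $A_h$. By Lemma~\ref{lem:classification} it has form I or form II. Both forms use $R_*$ as the fourth row and $C_h(p)$ as the third column, and the corresponding entry is $(H_4)_{43}=1$ in $A_h$. This cell is changed, which contradicts the assumption of the case. So every $B_h$-copy must use at least one pair $(R_*,C_h(q))$.

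In the second case, $R_*$ is one of the rows and some leaf column $C_h(q)$ is one of the columns. We pin down the position of $R_*$. The row dummy follows every variable block and every marker, so $R_*$ is the last row $r_4$. The fourth row of $H_4$ is $(0,1,1,1)$. Hence $C_h(q)=c_1$, since only there may $B_h$ have a zero against $R_*$. The remaining columns $c_2,c_3,c_4$ must have $B_h$-entry one against $R_*$. Every dummy--non-marker entry is one apart from leaves, so $c_2,c_3,c_4$ are marker columns or non-leaf variable columns. Marker columns precede all variables, however, and $c_1$ is a variable column. Therefore $c_2,c_3,c_4$ are non-leaf variable columns, and more than one leaf cannot appear.

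Next we use the rows. Column $1$ of $H_4$ is $(0,1,1,0)$, so $r_2$ and $r_3$ have a one against the leaf column $C_h(q)$, while $r_1$ has a zero. We then enumerate by the role table and \eqref{eq:values} which rows see a leaf column. These are the marker rows of $W_{h,b}^{\pm}$ (group matching the role of $C_h$), the leaf rows $R_h(p)$ with $p\le q$, and the level-$h$ rows against $C_{h-1}$ structure. We use the ordering to exclude all combinations, checking the column signatures $(1,0,0,1)^{T}$ against $c_2$, and so on. Here the essential facts are that leaf columns precede their minus parents but follow their plus parents, and that non-leaf variable columns meeting the leaf rows are only the level-$(h-1)$ parents.

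The main obstacle is this final finite but messy case analysis. It is where the arithmetic encoding from Lemma~\ref{lem:encoding} and the analogue of Lemma~\ref{lem:arithmetic} for the repaired predicate would be invoked. We expect the paper reduces it to an integer predicate $\mathcal C_h^1$ and shows that it is unsatisfiable, uniformly in $h$. We would attempt that reduction first, rather than a hand enumeration.
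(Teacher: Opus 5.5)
\textbf{There is a gap: Case~2 is never actually argued.} Your reduction is correct up to the point where the new configurations must be excluded, and then it stops. You list candidate rows, assert that ``the ordering'' excludes all combinations, and propose to fall back on the repaired certificate. That fallback makes everything before it idle. Lemma~\ref{lem:quotient}, Lemma~\ref{lem:encoding} and the implication $\mathcal C_h^1\Rightarrow\bot$ of Lemma~\ref{lem:arithmetic} prove the lemma outright, with no case split, and that is exactly the paper's proof. As written, the proposal is therefore either the paper's argument with a redundant preamble, or incomplete at its one nontrivial step.

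\textbf{How to close Case~2.} The missing case takes a few lines, and finishing it gives a genuinely different proof. You have $r_4=R_*$, $c_1=C_h(q)$, and $c_2,c_3,c_4$ non-leaf variable columns. You should also exclude $C_*$ here; like the markers, it precedes the tree.
\begin{itemize}
\item Since $(H_4)_{21}=(H_4)_{31}=1$, both $r_2$ and $r_3$ have a one against $C_h(q)$. Apart from $R_*$, the only such row blocks are $A_2(W^+_{h,b})$ and $A_1(W^-_{h,b})$ with $b=q\bmod2$, together with leaf rows $R_h(p)$ with $p\le q$.
\item Both marker rows have non-marker support equal to the parity-$b$ leaves, so they vanish on $c_3$ and $c_4$. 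Since $(H_4)_{24}=(H_4)_{33}=1$, neither can be $r_2$ or $r_3$.
\item Hence $r_2=R_h(p_2)$ and $r_3=R_h(p_3)$ with $p_2<p_3$, because leaves are ordered by label.
\item A leaf row meets non-leaf variable columns only at depth $h-1$, through $\ind[\lfloor p/2\rfloor\le u]$ or $\ind[\lfloor p/2\rfloor<u]$. Both are nonincreasing in $\lfloor p/2\rfloor$.
\item So $B_h(r_3,c_3)=1$ forces $c_3$ to lie at depth $h-1$. Then $B_h(r_2,c_3)\ge B_h(r_3,c_3)=1$, contradicting $(H_4)_{23}=0$.
\end{itemize}

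\textbf{Comparison with the paper.} Completed this way, your argument derives freeness of $B_h$ from the classification of $A_h$, and so from the first certificate alone. It adds an elementary treatment of the configurations that use an edited cell. This shows the second certificate is logically dispensable and explains concretely why the edit creates no new copy. The paper instead appeals to the separately certified implication $\mathcal C_h^1\Rightarrow\bot$. That route treats all tuples uniformly and does not depend on the original classification.
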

\begin{proof}
Suppose a copy exists in $B_h$. Lemma~\ref{lem:quotient} gives an ordered
block tuple realizing it, and Lemma~\ref{lem:encoding} sends that tuple
to a witness for $\mathcal C_h^1$. The second implication of
Lemma~\ref{lem:arithmetic} excludes such a witness. This argument applies
to the entire repaired entry table. The bijections in
Lemma~\ref{lem:size} identify the conclusion with freeness in the usual
numerical matrix.
\end{proof}

\section{Exact count and distance}\label{sec:count}
By Lemma~\ref{lem:quotient}, each terminal tuple contributes the product of
its block multiplicities. For form I this product is $2m^5$: the four
marker blocks and the dummy row each have size $m$, the parent column has
size two, and the leaf row and column each have size one. For form II there
are three marker blocks, a dummy row, three leaf positions and a parent
column, giving $2m^4$. The contributions are summarized by
\begin{center}
\begin{tabular}{c|c|c|c}\toprule
Family&Number of parameters&Copies per parameter&Contribution\\\midrule
I&$m$&$2m^5$&$2m^6$\\
II&$m(m-2)/4$&$2m^4$&$(m-2)m^5/2$\\\bottomrule
\end{tabular}
\end{center}

Indeed, the admissible pairs for form II are the strictly ordered pairs
within each of the two parity classes, each of size $m/2$. Hence
\[
 \#\{(q,p):0\le q<p<m,\ q\equiv p\pmod2\}
 =2\binom{m/2}{2}=\frac{m(m-2)}4.
\]
For $h=1$, $m=2$ and the second family is empty, as the formula gives.
The two families are disjoint and their parameters are uniquely recovered
from the selected blocks. The classification lemma therefore yields
\[
 \Copy(A_h)=2m^6+\frac{m-2}{2}m^5.
\]

\begin{proof}[Proof of Theorem~\ref{thm:main}]
Lemma~\ref{lem:size} gives $n_h=(14h+2)m$, and the preceding count gives
the last identity in \eqref{eq:main}. The canonical repair changes exactly
the $m^2$ cells in
$R_*\times\{C_h(p):p<m\}$ and is $H_4$-free. Proposition~\ref{prop:lower}
shows that no repair can use fewer cells. Thus $D_{H_4}(A_h)=m^2$,
and dividing by $n_h^2$ gives the stated normalized distance. The two
remaining assertions are Lemmas~\ref{lem:classification} and
\ref{lem:repair}.
\end{proof}

\section{Hitting original copies versus eliminating all copies}\label{sec:hitting}
\input{additions/exact_hitting}
\input{additions/parameters}
\appendix
\input{additions/context}
\input{additions/structure}
\input{additions/alternate_lower}
\input{additions/robust}
\input{additions/repair_spectrum}
\input{formalization}
\input{additions/compression}
\input{additions/verify}

\end{document}

%% file: additions/exact_hitting.tex
\begin{corollary}[Exact transversal number]\label{cor:exacttransversal}
For every $h\ge1$, with $m=2^h$, the host satisfies
\[
 \tau_{H_4}(A_h)=m<m^2=D_{H_4}(A_h).
\]
In particular, the ratio of unrestricted editing distance to the
minimum number of entries hitting all original copies is exactly $m$.
\end{corollary}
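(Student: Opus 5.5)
The plan is to prove the two inequalities $\tau_{H_4}(A_h)\le m$ and $\tau_{H_4}(A_h)\ge m$ separately, with Lemma~\ref{lem:classification} doing most of the work. Here $\tau_{H_4}(A_h)$ is the least number of cells meeting every induced ordered copy of $H_4$ already present in $A_h$. Once $\tau_{H_4}(A_h)=m$ is established, Theorem~\ref{thm:main} supplies $D_{H_4}(A_h)=m^2$. The strict inequality $m<m^2$ holds because $m=2^h\ge2$, and the ratio $m^2/m=m$ is then immediate.

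\emph{Upper bound.} By Lemma~\ref{lem:classification}, every physical copy projects to a block tuple of terminal form I or II with some leaf parameter $p<m$. In both forms the third row block is the leaf row $R_h(p)$ and the third column block is the leaf column $C_h(p)$. Each of these blocks is a single position, since leaves have multiplicity one. Hence every copy with parameter $p$ contains the physical cell $(R_h(p),C_h(p))$. The $m$ diagonal leaf cells $\{(R_h(p),C_h(p)):0\le p<m\}$ therefore meet every original copy, giving $\tau_{H_4}(A_h)\le m$.

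\emph{Lower bound.} It suffices to exhibit $m$ pairwise cell-disjoint copies in $A_h$, because each hitting cell then meets at most one of them. For each $p<m$, take the form I block tuple with parameter $p$ and realize it physically as follows.
\begin{itemize}
\item Choose offset $p$ in each of the marker blocks $A_1(t),A_2(t),B_1(t),B_2(t)$ and in the dummy row block $R_*$. All of these blocks have size $m$, so offset $p$ is available.
\item Take the unique leaf positions for $R_h(p)$ and $C_h(p)$.
\item Take either position of the parent column $C^-_{h-1}(\lfloor p/2\rfloor)$.
\end{itemize}
By Lemma~\ref{lem:quotient}, any such choice is an induced ordered copy. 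Every row of the copy for $p$ is then either the leaf row $R_h(p)$ or a marker or dummy row at offset $p$. For $p\ne p'$ these row sets are disjoint as sets of physical rows: the leaf rows differ, and marker or dummy rows at different offsets are different positions even when the modes $W^-_{h,p\bmod 2}$ coincide. Consequently no cell lies in two of these copies, and $\tau_{H_4}(A_h)\ge m$.

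The only delicate point is checking disjointness when two parameters share a parity, since they then share all marker blocks. Indexing the offsets by $p$ itself handles this, so I expect no real obstacle beyond invoking the classification correctly. In particular, the upper bound needs the fact that \emph{every} copy passes through its diagonal leaf cell, and that is exactly the content of the exhaustive classification in Lemma~\ref{lem:classification}.
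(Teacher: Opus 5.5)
Your proposal is correct and matches the paper's proof: the leaf-diagonal cells $(R_h(p),C_h(p))$ give $\tau_{H_4}(A_h)\le m$ via the terminal classification, and $m$ pairwise cell-disjoint type-I copies with marker and dummy offsets $p$ give $\tau_{H_4}(A_h)\ge m$. The only difference is that you get cell-disjointness from row-disjointness alone, so the parent-column offset can be arbitrary, whereas the paper fixes offset $p\bmod 2$ there to make the copies disjoint on both axes.
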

\begin{proof}
Let
\[
 \Delta_h=\{(R_h(p),C_h(p)):0\le p<m\}.
\]
Both terminal families contain the same-leaf diagonal
$(R_h(p),C_h(p))$ as their third-row, third-column entry.
The leaf blocks are singletons, so this is one fixed physical cell
for every realization of the corresponding terminal tuple.
Lemma~\ref{lem:classification} therefore gives
$\tau_{H_4}(A_h)\le|\Delta_h|=m$.

For the reverse inequality, use offsets numbered from zero. For
each $p<m$, put $t_p=W^-_{h,p\bmod2}$ and choose a type-I copy
$Q_p$ whose two marker rows, two marker columns and dummy row
all have offset $p$ in their respective blocks. Use the singleton
leaf row and column with label $p$, and use offset $p\bmod2$
in the parent column block $C^-_{h-1}(\lfloor p/2\rfloor)$.
These choices are valid: the marker and dummy blocks have size $m$,
and each parent block has size two. They realize an ordered copy
by the type-I description and Lemma~\ref{lem:quotient}.

The copies $Q_p$ are pairwise disjoint on each axis. Marker
positions of different modes or groups lie in different blocks;
within any one such block, different labels $p$ give different
offsets. The dummy row offsets and leaf labels are also distinct.
Two selected parent columns could agree only if both
$\lfloor p/2\rfloor$ and $p\bmod2$ agree, which determines $p$.
The marker, leaf, dummy and parent categories used on a given axis
are mutually distinct. Thus the $m$ copies are, in particular,
pairwise cell-disjoint. Every set hitting all original copies must
meet each $Q_p$ in a different cell, and hence has size at least $m$.
Combining the bounds proves the transversal identity; the editing
identity is Theorem~\ref{thm:main}, and $m\ge2$ gives strictness.
\end{proof}

The strict gap between hitting original copies and producing a free
matrix sharpens the distinction already developed in \cite{OAI}.
A separate Lean supplement formalizes the exact transversal identity,
using terminal exhaustion from the primary classification theorem.

%% file: additions/parameters.tex
\section{Quantitative consequences of the exact formulas}\label{sec:quantitative}
The main theorem determines the obstruction at three scales: tree
depth $h$, matrix order $n_h$, and normalized edit distance
$\epsilon_h$.  Explicit changes of scale sharpen the comparison
with polynomial removal inequalities.

\begin{proposition}[Stretched-exponential copy-density profile]
\label{prop:stretched}
At the exact values $\epsilon_h=(14h+2)^{-2}$, the copy density is
\begin{equation}\label{eq:stretched}
 \delta_h=\biggl(\frac52-2^{-h}\biggr)\epsilon_h^4
    \exp\biggl[-\frac{\log 4}{14}
                   \bigl(\epsilon_h^{-1/2}-2\bigr)\biggr].
\end{equation}
In particular, the largest valid lower-bound function for the
individual pattern $H_4$ cannot exceed the right-hand side along
this sequence of edit parameters.
\end{proposition}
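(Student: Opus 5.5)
The identity \eqref{eq:stretched} is an algebraic rewriting of the exact count in Theorem~\ref{thm:main}, so the plan is to start from the closed form for $\delta_h$ and change variables from $h$ to $\epsilon_h$.

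\emph{Closed form in $h$.} By Theorem~\ref{thm:main}, $\Copy(A_h)=2m^6+\frac{m-2}{2}m^5=m^6\bigl(\frac52-2^{-h}\bigr)$, since $\frac{m-2}{2}m^5=\frac12m^6-m^5$ and $m^5/m^6=2^{-h}$. With $d_h=14h+2$ and $n_h=d_hm$ this gives
\[
 \delta_h=\frac{\Copy(A_h)}{n_h^8}=\Bigl(\frac52-2^{-h}\Bigr)d_h^{-8}m^{-2}=\Bigl(\frac52-2^{-h}\Bigr)d_h^{-8}4^{-h},
\]
which agrees with the expression already used in the proof of Corollary~\ref{cor:failure}.

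\emph{Change of variables.} Since $\epsilon_h=d_h^{-2}$, we have $d_h^{-8}=\epsilon_h^4$, and $d_h=\epsilon_h^{-1/2}$ gives $h=(\epsilon_h^{-1/2}-2)/14$. Therefore
\[
 4^{-h}=\exp(-h\log4)=\exp\Bigl[-\frac{\log4}{14}\bigl(\epsilon_h^{-1/2}-2\bigr)\Bigr].
\]
Substituting this and $d_h^{-8}=\epsilon_h^4$ into the closed form yields \eqref{eq:stretched} exactly. The only point needing care is that $\epsilon_h^{-1/2}$ means the positive root, which is $d_h$ because $d_h>0$.

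\emph{The ``in particular'' clause.} Suppose $f$ is any function such that $N_{H_4}(A)\ge f(\dist_{H_4}(A))\,n^8$ holds for all $A$ and $n$. Apply this to $A=A_h$. Theorem~\ref{thm:main} gives $\dist_{H_4}(A_h)=\epsilon_h$ exactly, so $f(\epsilon_h)\le\delta_h$, which is the right-hand side of \eqref{eq:stretched}.

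None of these steps is a genuine obstacle. The one thing to get right is that the bound on $f$ depends on the \emph{exact} distance value from Theorem~\ref{thm:main}. If only the lower bound $D_{H_4}(A_h)\ge m^2$ were available, one would also need $f$ to be monotone in its argument to reach the same conclusion.
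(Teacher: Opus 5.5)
Your proposal is correct and matches the paper's proof: you rewrite $\delta_h=(5/2-2^{-h})d_h^{-8}4^{-h}$ from the exact count, substitute $d_h=\epsilon_h^{-1/2}$ and $h=(\epsilon_h^{-1/2}-2)/14$, and apply any valid lower-bound function to $A_h$ at its exact distance $\epsilon_h$. Your closing remark is also correct and makes explicit something the paper leaves implicit: using the exact distance, rather than only the lower bound $D_{H_4}(A_h)\ge m^2$, is what avoids assuming the lower-bound function is monotone.
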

\begin{proof}
By Theorem~\ref{thm:main},
$\delta_h=(5/2-2^{-h})/((14h+2)^8 4^h)$.
The relation $\epsilon_h^{-1/2}=14h+2$ gives
$h=(\epsilon_h^{-1/2}-2)/14$. Substitution yields
\eqref{eq:stretched}. For any function $g(\epsilon)$ satisfying
$N_{H_4}(A)\ge g(\dist_{H_4}(A))n^8$ for all finite $A$,
applying the bound to $A_h$ gives $g(\epsilon_h)\le\delta_h$.
\end{proof}

\begin{proposition}[Profile by matrix order]\label{prop:byorder}
As $h\to\infty$, the constructed orders and copy counts satisfy
\[
 \epsilon_h=\Theta((\log n_h)^{-2}),\qquad
 \delta_h=\Theta\!\left(\frac1{n_h^2(\log n_h)^6}\right),
 \qquad
 N_{H_4}(A_h)=\Theta\!\left(\frac{n_h^6}{(\log n_h)^6}\right).
\]
\end{proposition}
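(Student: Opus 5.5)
The plan is to substitute the exact formulas of Theorem~\ref{thm:main} and compare everything with $\log n_h$. The first step is to show $\log n_h=\Theta(h)$. Since $n_h=(14h+2)2^h$, we have
\[
 \log n_h=h\log2+\log(14h+2),
\]
and the second term is $o(h)$. Hence $\log n_h\sim h\log 2$, and in particular $14h+2=\Theta(\log n_h)$. Because $\epsilon_h=(14h+2)^{-2}$ exactly, the first estimate $\epsilon_h=\Theta((\log n_h)^{-2})$ follows at once.

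For the copy count, write $N_{H_4}(A_h)=m^6\bigl(\tfrac52-\tfrac1m\bigr)$, where the bracket lies in $[2,\tfrac52]$. Thus $N_{H_4}(A_h)=\Theta(m^6)$. Now express $m$ through $n_h$: since $m=n_h/(14h+2)$ and $14h+2=\Theta(\log n_h)$, we get $m=\Theta(n_h/\log n_h)$. Raising to the sixth power gives $N_{H_4}(A_h)=\Theta\bigl(n_h^6/(\log n_h)^6\bigr)$. The density claim is then obtained by dividing by $n_h^8$:
\[
 \delta_h=\Theta\bigl(n_h^{-2}(\log n_h)^{-6}\bigr).
\]
As a cross-check, this agrees with the form $\delta_h=(5/2-2^{-h})/(d_h^8 4^h)$ from Corollary~\ref{cor:failure}: substitute $4^h=m^2=n_h^2/d_h^2$ and use $d_h=\Theta(\log n_h)$.

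The only step needing care is the two-sided comparison $14h+2\asymp\log n_h$, and both directions are elementary. For the upper bound on $\log n_h$, use $\log(14h+2)\le 14h+2$, which gives $\log n_h\le (\log 2+14)h+2=O(h)$. For the lower bound, use $\log n_h\ge h\log 2$, so $h=O(\log n_h)$ and hence $14h+2=O(\log n_h)$ for $h\ge1$. Once these bounds are in place, the $\Theta$-constants propagate through the sixth and eighth powers without further difficulty.
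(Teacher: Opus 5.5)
Your proof is correct and follows the paper's argument. Both show $\log n_h=h\log 2+\log(14h+2)=\Theta(h)$, so $14h+2=\Theta(\log n_h)$; both then rewrite $N_{H_4}(A_h)=(\tfrac52-\tfrac1m)m^6$ with $m=n_h/(14h+2)$ and divide by $n_h^8$ (your two-sided bounds on $\log n_h$ just spell out the one-line step the paper asserts).
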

\begin{proof}
Writing $d_h=14h+2$ and $m=2^h$ gives
$\log n_h=h\log2+\log d_h=\Theta(h)$, so
$d_h=\Theta(\log n_h)$ and $m=n_h/d_h$.
The exact copy formula may be rearranged as
\[
 N_{H_4}(A_h)=\biggl(\frac52-\frac1m\biggr)
                    \frac{n_h^6}{d_h^6}.
\]
The coefficient lies in $[2,5/2)$ because $m\ge2$.
Dividing by $n_h^8$ and using the distance identity gives the three
assertions.
\end{proof}

\subsection{A small-depth consistency table}
The exact formulas provide convenient integer-valued checks of the
terminal enumeration. They require no asymptotic approximation.
\begin{center}\small
\begin{tabular}{r|r|r|r|r|r}
\toprule
$h$ & $m$ & $n_h$ & family I & family II & $N_{H_4}(A_h)$\\\midrule
1&2&32&128&0&128\\
2&4&120&8192&1024&9216\\
3&8&352&524288&98304&622592\\
4&16&928&33554432&7340032&40894464\\
\bottomrule
\end{tabular}
\end{center}
The two contributions follow by substituting $m=2^h$ into the
form-I and form-II expressions. At depth one, family II is absent;
at higher depths, its share rises toward one fifth of the total.
The table illustrates why a search that detects only the single
explicit propagation family cannot certify the complete copy count.

%% file: additions/context.tex
\section{Background and interpretation}\label{sec:context}
The quantitative issue for ordered matrices is the dependence on the edit
parameter.  The row and column orders are independent: an occurrence of a
pattern $F\in\{0,1\}^{s\times t}$ is a pair of increasing injections
$[s]\hookrightarrow[n]$ and $[t]\hookrightarrow[n]$, and all $st$ entries,
including the zeros, are prescribed. The corresponding hereditary
property consists of matrices avoiding $F$ as an induced ordered
submatrix.  A removal bound supplies a lower bound on the number of
occurrences whenever one must change a prescribed fraction of all
entries to obtain a member of this hereditary property.

For an individual fixed $F$, a polynomial bound means the existence of
$c_F,C_F>0$ such that the right side of \eqref{eq:removal-bound} holds
for \emph{every} size and \emph{every} matrix.  Failure is witnessed by
one fixed pattern and a sequence of arbitrarily large hosts.  The
quantifiers matter: choosing a different forbidden pattern as the
accuracy parameter decreases would address a different question.
The present family uses the same $H_4$ at every depth; the pattern is
independent of $h$, $m$ and $n_h$.

We normalize occurrences by $n^{2k}$, as in
\eqref{eq:removal-bound}, and count increasing injections, so a copy
has no additional factorial multiplicity. For a fixed $k$, using
$\binom{n}{k}^2$ instead changes the density only by a factor bounded
above and below by positive constants for $n\ge k$. Thus the failure
of a polynomial bound is independent of these two common conventions.

Alon, Ben-Eliezer and Fischer \cite{ABF} proved qualitative testing
results for hereditary properties of ordered graphs and matrices.
Alon and Ben-Eliezer \cite{ABE20} obtained a polynomial dependence for
a \emph{hitting} formulation: a small set of entries may intersect all
copies present in the original matrix. Hitting the original copies
need not produce a free matrix after entry changes, because a changed
zero or one can complete a new induced copy.  The relevant full removal
problem minimizes changes over the resulting, genuinely $F$-free
matrices.  This distinction is particularly significant for an induced
pattern, whose constraints include prescribed zeros as well as ones.
Alon and Ben-Eliezer also establish a full polynomial removal lemma
for every fixed finite family of binary matrices closed under row
permutations \cite[Theorem~1.5]{ABE20}, and hence, by transposition,
for families closed under column permutations. The singleton
$\{H_4\}$ has neither closure property, so this positive result does
not apply to the present obstruction. Gishboliner and Shapira
\cite[Conjecture~4.5]{GS25} formulate the full polynomial removal
conjecture for a single fixed ordered binary matrix.
Towsner\'s ordered-hypergraph removal lemma provides a further
point of comparison at the level of qualitative removal \cite{Towsner}.
Earlier work on unordered bipartite matrices and conditional regularity
provided the polynomial-query comparison \cite{AFN07}; the ordering
requirement removes the row-and-column permutation freedom used in that
setting.

The 2026 construction of \cite{OAI} gave a fixed $66\times66$ binary
obstruction. Its tree-based construction showed that forcing gadgets can
make edit distance persist over successive levels, while an order
constraint can confine complete copies to terminal locations.  Our
pattern \eqref{eq:pattern} has four rows and four columns; its top-left
$2\times2$ anti-diagonal marker and the two identity coupling blocks
provide a smaller frame for the same kind of propagation.  The
sharper size is accompanied here by exact distance and copy-count
identities.  We isolate the roles of the order, the two signs, and
terminal identification in subsequent sections.

The tree-based propagation and terminal-sign identification build on
\cite{OAI}. The additional results here are the explicit $4\times4$
pattern, its exact physical copy count, its optimal unrestricted
repair and the formal certification of these identities.

\subsection{Three quantities that must be kept separate}
Let $\mathcal C_H(A)$ be the set of increasing row/column tuples
realizing $H$. Define the \emph{transversal number}
\[
 \tau_H(A)=\min\{|S|: S\subseteq[n]\times[n],\;
                 S\cap\operatorname{cells}(Q)\ne\varnothing
                 \ \forall Q\in\mathcal C_H(A)\}.
\]
Every edit set turning $A$ into an $H$-free matrix intersects every
copy originally in $A$; consequently
\begin{equation}\label{eq:hitting_le}
 \tau_H(A)\le D_H(A).
\end{equation}
Equality does not follow from the definition: a hitting set has no
obligation to avoid copies created by its own edits.  For our host,
the protected-selection argument establishes the stronger lower bound
$D_{H_4}(A_h)\ge m^2$ \emph{for all resulting matrices}.  The
canonical repair gives a matching upper bound by an explicit final
$H_4$-free matrix.  Neither step is supplied merely by counting
original occurrences.

Indeed, the terminal classification below shows a concrete gap between
the two quantities. The $m$ leaf-diagonal cells
$\{(R_h(p),C_h(p)):0\le p<m\}$ intersect every original copy,
so $\tau_{H_4}(A_h)\le m<m^2=D_{H_4}(A_h)$.
Corollary~\ref{cor:exacttransversal} proves that the hitting bound is exact.
Every terminal copy contains its same-leaf diagonal as its
third-row, third-column cell. This is the hitting-versus-removal
phenomenon emphasized in \cite{OAI}, here retained by the smaller
pattern; it is not a new general separation principle.

It is also useful to distinguish physical copies from type-level
copies. In a block-constant host, quotient rows and columns encode
signatures and ordering, whereas block multiplicities control the
physical number of realizations. The quotient alone has no absolute
copy count unless its tuples are weighted by those multiplicities.
This separation is essential in the terminal enumeration.

\subsection{The four-by-four pattern as a forcing frame}
Partition $H_4$ after the second row and the second column. The
four component blocks are respectively $J_2-I_2$, $I_2$, $I_2$,
and the asymmetric body $P$. Thus two marker rows and columns
fix twelve entries of a potential induced copy; the residual body
has only four entries. If the residual body has the form
$\left(\begin{smallmatrix}a&b\\c&d\end{smallmatrix}\right)$,
forbidding $H_4$ imposes the single constraint
\[
 (a,b,c,d)\ne(1,0,1,1).
\]
The construction uses this exact negative constraint in four
orientations.  Two orientations push a value from level $i-1$ to
a cross-level cell, and two push it from that cell to the diagonal
at level $i$.  The plus chain propagates the value one; the minus
chain propagates zero.  Their terminal positions coincide.  The
gadget constraints therefore become inconsistent unless an entry
required by a selected frame has been changed.  The XOR choice of
physical representatives makes those required entries disjoint
across $m^2$ choices.

%% file: additions/structure.tex
\section{Order and block structure in detail}\label{sec:structure}
This section records elementary consequences of the host definition
that are useful when interpreting the certificate classification.
The binary word of a node $(d,p)$ consists of the first $d$ bits in
the length-$h$ binary expansion of an appropriate leaf label.
An internal signed node is represented by the corresponding word,
together with its sign.  The leaf at depth $h$ has a single type.

\begin{lemma}[Tree comparison by words]\label{lem:wordorder}
Take two distinct variable blocks of the same axis. If their node
words have an incomparable prefix, their order is determined by the
first bit on which they differ. If one word is a proper prefix of
the other, the ancestor precedes the descendant precisely when its
sign is the opening sign for that axis. The opening signs are minus
for rows and plus for columns. If the two words agree, the blocks
are the two signs at one internal node, and the opening block precedes
the closing block.
\end{lemma}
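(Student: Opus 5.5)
This is a direct unwinding of the host's layout, and I would organise it by the three relative positions two distinct variable blocks on one axis can have. By the host construction, each axis is the recursive concatenation that, at an internal node $(d,p)$, lists one signed block at $(d,p)$, then the subtree of $(d+1,2p)$, then the subtree of $(d+1,2p+1)$, then the other signed block at $(d,p)$. On rows the minus block opens and the plus block closes; on columns the reverse holds. The leaf at depth $h$ is a single block. I would first record this as an explicit inductive description of the variable part of each axis. I would also note that the node word of $(d,p)$ is the $d$-bit binary expansion of $p$, so the word of the left child extends it by $0$ and the word of the right child extends it by $1$. Hence the subtree of $(d,p)$ consists exactly of the blocks whose words have the word of $(d,p)$ as a prefix, and it occupies a contiguous interval of the axis.

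The case analysis then follows the longest common prefix $w$ of the two words, taken at the node $v$ with word $w$.

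First, if the words are incomparable, they first differ at position $|w|+1$. One block lies in the left child subtree of $v$ (bit $0$) and the other in the right child subtree (bit $1$). Since the left subtree interval precedes the right one, the block whose differing bit is $0$ comes first.

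Second, if one word is a proper prefix of the other, the ancestor block is one of the two signed blocks at $v$. The descendant lies inside a child subtree, which sits strictly between the opening and closing blocks of $v$. So the ancestor precedes the descendant exactly when its sign is the opening sign, namely minus on rows and plus on columns.

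Third, if the words agree, the blocks are distinct, so they must be the two signs at one internal node, since a leaf is a single block. The opening block then precedes the closing block by definition.

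The only point needing care is that the recursive description genuinely matches the paper's definition. In particular, the paper's phrase "the minus block at a node precedes the two child subtrees and the plus block follows them; on the column axis these two positions are reversed" must be read as applying at every internal node. In addition, the level-$h$ identification must not break the interval property: the leaf is the full subtree of its word. I would also check that contiguity of subtrees follows by induction on $h-d$, which is routine. The dummy and marker blocks play no role, because the lemma concerns only variable blocks.
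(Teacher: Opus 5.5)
Your proposal is correct and follows essentially the same route as the paper's proof. Both read off the recursive listing (opening block, left subtree, right subtree, closing block) at each internal node and then split into the three cases of incomparable words, proper prefix and equal words; your remarks on subtree contiguity and the $0$/$1$ child-bit convention only make explicit what the paper treats as immediate from that listing.
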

\begin{proof}
At any internal vertex the recursive row list has the form
\[
 R_d^-(p),\quad \text{left subtree},\quad
 \text{right subtree},\quad R_d^+(p),
\]
whereas the column list exchanges the two signed blocks. If the
words separate into the two different child subtrees, the left
subtree precedes the right regardless of the signs within either.
If one word names an ancestor of the other, the ancestor's opening
block occurs before the descendant's entire subtree and its closing
block after it. Finally, equal words can name distinct blocks only
at an internal node with opposite signs; the displayed recursive
lists place the opening sign before the closing sign. These three
cases exhaust pairs of distinct variable blocks.
\end{proof}

\begin{proposition}[A uniform quotient principle]\label{prop:uniformquotient}
Let $F$ be an $s\times t$ binary matrix with no repeated row
signature and no repeated column signature. Let $M$ be any finite
block-constant ordered binary matrix whose row blocks and column
blocks are nonempty consecutive intervals. Then
\[
 N_F(M)=\sum_{(r,c)\in\mathcal Q_F(M)}
      \prod_{i=1}^{s}|R_{r_i}|\prod_{j=1}^{t}|C_{c_j}|,
\]
where $\mathcal Q_F(M)$ consists of the increasing row-block and
column-block tuples realizing the entries of $F$ on the quotient.
Moreover, $M$ is $F$-free if and only if its quotient has no such
tuple.
\end{proposition}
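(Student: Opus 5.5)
The plan is to reduce the physical count to a count over block tuples by sorting physical copies according to which blocks their rows and columns fall in, and then to prove the freeness statement using the same reduction.

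First, fix a physical induced ordered copy of $F$ in $M$, with rows $i_1<\cdots<i_s$ and columns $j_1<\cdots<j_t$. Send each row $i_k$ to the index $r_k$ of the row block containing it, and each column to its column block $c_\ell$.

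\begin{itemize}
\item \emph{Distinct blocks.} Two rows lying in one block agree on every column, because $M$ is block-constant. Such rows would therefore induce equal rows of $F$, which is excluded by hypothesis. So the $r_k$ are pairwise distinct, and in the same way the $c_\ell$ are pairwise distinct.
\item \emph{Increasing tuple.} Blocks are consecutive intervals, so their order is a total order compatible with the positions. Hence $i_1<\cdots<i_s$ forces $r_1<\cdots<r_s$ in block order, and likewise for the columns.
\item \emph{Quotient entries.} Every entry $M(i_k,j_\ell)$ equals the quotient entry on $(r_k,c_\ell)$. So $(r,c)$ realizes $F$ on the quotient and lies in $\mathcal Q_F(M)$.
\end{itemize}

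This defines a map from physical copies to $\mathcal Q_F(M)$.

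Second, I count the fibre over a fixed $(r,c)\in\mathcal Q_F(M)$. Any choice of one position from each block $R_{r_k}$ and one from each block $C_{c_\ell}$ gives an increasing tuple, because the blocks are disjoint intervals listed in increasing order. All its entries agree with $F$ by block-constancy, and its image under the map is $(r,c)$. Conversely, each element of the fibre arises from exactly one such choice. The choices are independent, so the fibre has exactly $\prod_i|R_{r_i}|\prod_j|C_{c_j}|$ elements. Summing over the disjoint fibres gives the formula.

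Third, freeness follows because every block is nonempty, so every factor in the sum is positive. Hence $N_F(M)=0$ exactly when $\mathcal Q_F(M)$ is empty.

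The only delicate point is the first step. Distinctness of blocks relies on the signature hypothesis, and order-compatibility relies on the interval hypothesis. Without distinctness, two rows of one copy could share a block, and the fibre count would then require choosing positions in increasing order within a block rather than arbitrary independent choices. This is exactly the argument of Lemma~\ref{lem:quotient}, carried out for a general $M$ in place of $A_h$.
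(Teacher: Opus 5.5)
Your proposal is correct and follows the paper's proof essentially step for step. You project each physical copy to its block tuple, getting distinctness from the signature hypothesis and strict increase from the interval hypothesis. You then count each fibre as the product of block sizes via independent choices, and read off freeness from the positivity of every summand.
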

\begin{proof}
If two chosen physical rows belonged to the same block, then they
would have identical entries on every chosen column, contradicting
the distinctness of the corresponding rows of $F$. The column
argument is symmetric. It follows that each physical copy projects
to a unique quotient tuple with strictly increasing block indices.
Conversely, any tuple of quotient blocks realizing $F$ admits a
physical copy by choosing one position from each participating
block. Every such choice preserves the order, because the blocks
are intervals; it also preserves all entries, because entries are
block-constant. The choices are independent and their number is the
product of the participating block sizes. Summing over the quotient
tuples gives the identity and its freeness consequence.
\end{proof}

There are two independent hypotheses in Proposition~\ref{prop:uniformquotient}.
The consecutive-interval requirement is responsible for preservation
of the ordered embedding, while distinct row and column signatures
ensure that the selected type tuples are strictly increasing.
For patterns with repeated signatures one may have several physical
rows in the same type, and a multiplicity formula requires binomial
factors and within-block order. The simple product formula suffices
for $H_4$ and for the general twin-free patterns considered in
Appendix~\ref{app:smallwitness}.

\subsection{Exact support signatures}
For a mode $t$ and its two marker groups, write
$\rho_1(t),\rho_2(t)$ for row-role predicates and
$\kappa_1(t),\kappa_2(t)$ for column-role predicates.
The marker rows of group $g$ have entry one against a non-marker
column exactly when $\kappa_g(t)$ holds there. Similarly, marker
columns of group $g$ have entry one against a non-marker row
exactly when $\rho_g(t)$ holds.  The corner of a single mode is
$J_2-I_2$, while the corner between different modes is zero.
The parity index in the $W$ modes is therefore a genuine structural
datum. It separates the two candidate child branches using only
one bit, and it is recovered from the parity of the terminal leaf.

\begin{lemma}[Propagation frames realize the prescribed order]
\label{lem:framesorder}
At every depth $1\le i\le h$ and for every node $p_i$ with parent
$p_{i-1}$, the four row and column pairs displayed in
Section~\ref{sec:distance} are increasing tuples on their respective
axes. Their marker restrictions and their incidences with the dummy
blocks agree with the top twelve entries of $H_4$.
\end{lemma}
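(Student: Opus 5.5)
The argument is a direct verification in two parts, checked mode by mode: first the order of the four frames $V_i^+,V_i^-,W_{i,b}^+,W_{i,b}^-$ with $b=p_i\bmod2$, then their top twelve entries. The order facts will come from the block order of Section~\ref{sec:host} together with Lemma~\ref{lem:wordorder}. The entry facts will come from the role table and the marker entry rule.

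For the order, each tuple starts with $A_1(t)<A_2(t)$ and $B_1(t)<B_2(t)$, since group 1 precedes group 2 and all marker blocks precede variable and dummy blocks. So only the two non-marker blocks in each row of the second table need comparing.

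\begin{itemize}
\item \emph{Rows.} By Lemma~\ref{lem:wordorder}, with minus the opening sign for rows, a minus ancestor precedes its minus descendant and a plus descendant precedes its plus ancestor. This gives $R_i^+(p_i)<R_{i-1}^+(p_{i-1})$ for $V_i^+$ and $R_{i-1}^-(p_{i-1})<R_i^-(p_i)$ for $V_i^-$, since $p_{i-1}=\lfloor p_i/2\rfloor$ makes the word of $p_{i-1}$ a proper prefix. For the $W$ modes the row dummy follows every variable block.
\item \emph{Columns.} Plus is the opening sign for columns, so $C_{i-1}^+(p_{i-1})<C_i^+(p_i)$ and $C_i^-(p_i)<C_{i-1}^-(p_{i-1})$. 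For the $V$ modes the column dummy precedes every variable block.
\item \emph{Level $i=h$.} The leaf is a single block sitting inside the subtree of its parent. It therefore lies between the parent's opening and closing blocks, and the same comparisons hold with either sign.
\end{itemize}

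For the entries, the $2\times2$ marker corner is $J_2-I_2$ by the rule "same mode, different group". The remaining eight entries are marker--non-marker incidences. A group-$u$ marker row must be $1$ exactly on column role $u$, and a group-$u$ marker column exactly on row role $u$. Comparing the frame rows and columns with the role table, every chosen non-marker block lies in the matching role class of its position. Two cases need explicit attention:
\begin{itemize}
\item For $W_{i,b}^\pm$ the level-$i$ block has parity $b$, so it lies in $R_{i,b}^\pm$ or $C_{i,b}^\pm$.
\item For $V_i^\pm$, the role-2 row at level $i-1$ or $i$ lies in the union $R_{i-1}^\pm$ or $R_i^\pm$ as required.
\end{itemize}
So the incidences give $I_2$ in both off-diagonal blocks, provided each non-marker block satisfies exactly one role predicate of the mode. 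The only way it could satisfy both is if the two role classes of a mode overlapped. They do not: the roles sit at different levels, or one is the dummy, and the leaf sign identification only matters when both roles are at level $h$, which never happens because roles pair level $i$ with level $i-1$ or with the dummy. Finally, dummy--non-marker entries are all $1$. Among the frame entries the only dummy cells are in the lower-right bodies, which are exactly the entries listed as $1$ in the body table.

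I expect the main obstacle to be exactly this "exactly one role" check at the terminal level. There, a leaf row is treated as both a signed $R_h^+$ and $R_h^-$ block, and one must confirm that no mode uses level $h$ in both of its roles with the same sign pattern. I would handle it by enumerating the four modes at $i=h$.
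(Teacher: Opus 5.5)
Your proposal is correct and follows essentially the same route as the paper's proof. Both arguments place the marker groups first, order ancestor and descendant by the opening/closing sign on each axis, and place the two dummies at opposite ends of the tree; then $J_2-I_2$ comes from the marker rule and the two $I_2$ blocks come from role membership. Your explicit check that the two role classes of each mode are disjoint (needed for the off-diagonal zeros of $I_2$, including at the leaf level) is a detail the paper leaves implicit. One small correction: the marker--dummy cells lie in the off-diagonal blocks, not in the lower-right body, and they are covered by that same role check.
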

\begin{proof}
The first two selected positions are the two groups of one marker
mode, both before the variable and dummy positions. The order of
parent and child among variable rows is ancestor--descendant for
minus and descendant--ancestor for plus; among columns it is the
reverse. The first column in a $V$ frame is the column dummy,
which precedes the tree. The second row in a $W$ frame is the row
dummy, which follows it. Consequently the four increasing row and
column tuples are exactly those displayed in the forcing table.
The marker corner is the off-diagonal matrix $J_2-I_2$.
Membership in the two row roles gives the lower-left identity,
and membership in the two column roles gives the upper-right
identity. The dummy incidences are one by the entry definition.
The corresponding body entries may change under edits, yielding
precisely the implications shown in the forcing table.
\end{proof}

\subsection{Where the two signs meet}
At depths below $h$, plus and minus designate different physical
row and column blocks even when they carry the same node label.
Their diagonal entries can therefore consistently have opposite
values. At the leaf level there is only one block $R_h(p)$ and only
one block $C_h(p)$; the selected offsets are both zero. Thus the
positive and negative inductive requirements apply to the same
physical matrix cell, not simply to two equal labels. The
identification is the final step required by the edit-distance
argument. It also explains why the protected selection must use
one common leaf $z=a\oplus b$ on both axes.

%% file: additions/alternate_lower.tex
\section{A second proof of the editing lower bound}\label{sec:probmethod}
The exact disjointness of protected sets under the XOR selection is a
convenient deterministic proof. A probabilistic argument reaches the
same editing threshold using only one-coordinate uniformity. This
formulation shows which feature is intrinsic to the tree host and
which feature is special to the XOR coupling.
The marginal-exposure and union-bound method adapts the edit-distance
argument of \cite{OAI} to the smaller marker frame used here. It is
included to explain the construction and to compare it with the
deterministic packing, rather than as a new probabilistic method.

Choose a leaf label $Z$ uniformly from $\{0,\dots,m-1\}$. At every
depth $d$ select the node $\lfloor Z/2^{h-d}\rfloor$ and, in each
signed block at that node, sample one physical row and one physical
column uniformly from its $2^{h-d}$ offsets. Sample the row and
column representatives in the marker and dummy blocks uniformly
and independently of each other and of the tree representatives.
The two signed paths meet at the same selected physical leaf
because a leaf block consists of a single position.

\begin{lemma}[Single-position marginal]\label{lem:marginal}
For any fixed variable row position at depth $d$ and sign, its
probability of selection is $m^{-1}$. The same is true for a fixed
variable column position, marker position, or dummy position.
\end{lemma}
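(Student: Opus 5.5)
The statement to prove is Lemma~\ref{lem:marginal}: a fixed position is selected with probability $m^{-1}$. I would prove it by direct computation from the sampling procedure, splitting into the variable case and the marker/dummy case.

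\textbf{Variable positions.} Fix a variable row position at depth $d$ with a given sign, lying in a block $R_d^\pm(p)$ at some offset $o$, where $0\le o<2^{h-d}$. This position is selected exactly when two independent events both hold:
\begin{enumerate}[label=(\roman*)]
\item the selected node at depth $d$ equals $p$;
\item the offset sampled in that block equals $o$.
\end{enumerate}
For event (i), recall that the selected node is $\lfloor Z/2^{h-d}\rfloor$ with $Z$ uniform on $\{0,\dots,m-1\}$. The map $Z\mapsto\lfloor Z/2^{h-d}\rfloor$ sends $\{0,\dots,m-1\}$ onto $\{0,\dots,2^d-1\}$, and every fibre is an interval of exactly $2^{h-d}$ labels. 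Hence the node equals $p$ with probability $2^{h-d}/m=2^{-d}$.

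For event (ii), conditional on the node being $p$, the offset in $R_d^\pm(p)$ is uniform among its $2^{h-d}$ positions. So it equals $o$ with probability $2^{-(h-d)}$.

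Multiplying gives $2^{-d}\cdot 2^{-(h-d)}=2^{-h}=m^{-1}$. Columns are handled by the identical computation with $C_d^\pm(p)$.

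\textbf{Leaves.} At $d=h$ the block is a singleton, so event (ii) is automatic. The probability is then just $\Pr[Z=p]=m^{-1}$, which agrees with the general formula. This also covers the identification of the two signed leaf blocks, since there is only one physical position per leaf.

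\textbf{Marker and dummy positions.} Every marker block and both dummy blocks have size exactly $m$, and their representatives are sampled uniformly within the block. A fixed position is therefore selected with probability $m^{-1}$ directly. The independence from the tree representatives is not needed for this marginal statement; it matters only for later joint arguments.

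\textbf{Main obstacle.} There is no real obstacle. The one point needing care is that the floor map partitions the leaf labels evenly, which holds because $m=2^h$ is a power of two. One should also read ``selection'' at depth $d$ as selecting the single node on the path, so that the conditional uniformity of the offset applies.
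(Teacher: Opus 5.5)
Your proof is correct and follows the paper's argument essentially verbatim. The depth-$d$ node is reached with probability $2^{-d}$, and the offset then with conditional probability $2^{-(h-d)}$, giving $2^{-h}=m^{-1}$; the singleton leaf and the size-$m$ marker and dummy blocks are handled directly. Your observation that every fibre of $Z\mapsto\lfloor Z/2^{h-d}\rfloor$ has size $2^{h-d}$ makes explicit a step the paper leaves implicit. The offset event is only defined once the node is reached, so calling (i) and (ii) ``independent'' is loose, but the conditional computation you actually carry out is the right one.
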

\begin{proof}
A depth-$d$ node is reached with probability $2^{-d}$.
Conditional on reaching it, a given offset is sampled with
probability $2^{-(h-d)}$. Multiplying gives $2^{-h}=m^{-1}$.
At the leaf depth the selection probability is $2^{-h}$ because
there is one representative per node. Marker and dummy blocks have
exactly $m$ positions and are sampled uniformly.
\end{proof}

Protect all selected cells for which at least one endpoint is a
marker or dummy position, together with the two selected
root-variable diagonal cells. No other variable--variable cells
are protected. In particular, the shared terminal cell, whose
value will be forced by the propagation, need not be protected.

\begin{lemma}[Two-coordinate exposure]\label{lem:twocoord}
For every fixed physical matrix cell $e$, the probability that $e$
is protected by the random selection is at most $m^{-2}$.
\end{lemma}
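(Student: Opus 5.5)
The plan is a case analysis on the two endpoints of the fixed cell $e=(r,c)$, combined with Lemma~\ref{lem:marginal} and the independence built into the sampling. First I observe that $e$ is protected only if its row $r$ is a selected row and its column $c$ is a selected column. The physical positions determine which clause of the protection rule could apply to $e$. Either at least one endpoint is a marker or dummy position, or both endpoints are root variable positions of the same sign, or $e$ is a variable--variable cell that is never protected. In the last case the probability is zero. So it suffices to bound $\Pr[r\text{ selected and }c\text{ selected}]$ in the remaining cases.

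\emph{Case 1: both endpoints are marker or dummy positions.} By construction, the row representatives in marker and dummy blocks are sampled independently of the column representatives. Each is uniform on a block of size $m$. Lemma~\ref{lem:marginal} then gives probability $m^{-1}\cdot m^{-1}=m^{-2}$.

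\emph{Case 2: exactly one endpoint is a marker or dummy position.} Suppose the row is the marker or dummy endpoint; the other orientation is symmetric. The marker and dummy row samples are independent of $Z$ and of all tree representatives. Hence the event that $r$ is selected is independent of the event that the variable column $c$ is selected. By Lemma~\ref{lem:marginal}, the first event has probability $m^{-1}$, and the second also has probability $m^{-1}$, since every variable column position at every depth has marginal $m^{-1}$. The product is $m^{-2}$.

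\emph{Case 3: $e$ is a root diagonal cell} $(R_0^\pm(0),C_0^\pm(0))$, with the same sign on both axes. The root node is reached with probability one. The row and column representatives of a root block are sampled independently and uniformly from $2^{h}=m$ offsets, so the probability is again $m^{-2}$.

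The cases are exhaustive, and each physical cell falls into exactly one of them, since the categories of its endpoints are fixed. Hence $\Pr[e\text{ protected}]\le m^{-2}$.

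The main point requiring care is the independence used in Cases 2 and 3. This requires reading the sampling procedure so that, within each signed block, the row offset and the column offset are drawn independently. It also requires that the marker and dummy draws on one axis are independent of the whole tree draw on the other axis. I would state this reading explicitly at the start of the proof. The correlation through $Z$ between a variable row and a variable column never matters, because such cells are protected only at the root, where $Z$ plays no role.
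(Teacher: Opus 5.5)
Your proposal is correct and follows the paper's proof. You use the same case split on whether an endpoint of $e$ is a marker or dummy position, with Lemma~\ref{lem:marginal} and cross-axis independence giving $m^{-1}\cdot m^{-1}$; whether $e$ is a root diagonal cell, with independent uniform offsets in two blocks of size $m$; and otherwise probability zero. The differences are cosmetic: you split the marker/dummy case into one- and two-endpoint subcases, and you state explicitly that row and column offsets within a signed block are drawn independently, which the paper's proof also assumes.
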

\begin{proof}
If either endpoint of $e$ is a marker or dummy position, that
endpoint is chosen independently of the tree-path and variable
offset choices on the other axis. Each endpoint has marginal
probability $m^{-1}$ by Lemma~\ref{lem:marginal}, and the two
events are independent, giving $m^{-2}$. If both endpoints are
variable positions at the root, their offsets are sampled
independently from the two root blocks of size $m$, giving the
same product. All other cells have protection probability zero.
\end{proof}

\begin{proposition}[Probabilistic lower bound]
\label{prop:problower}
Every edit set that turns $A_h$ into an $H_4$-free binary matrix
contains at least $m^2$ cells.
\end{proposition}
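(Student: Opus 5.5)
We prove Proposition~\ref{prop:problower} by a first-moment argument over the random selection, in place of the deterministic packing of Lemma~\ref{lem:injective}. Let $E$ be an edit set with $|E|<m^2$ that turns $A_h$ into an $H_4$-free matrix $D$. Call the random selection \emph{spoiled} when some cell of $E$ is protected by it. By Lemma~\ref{lem:twocoord} and the union bound,
\[
 \Pr[\text{spoiled}]\le\sum_{e\in E}\Pr[e\text{ protected}]\le |E|\,m^{-2}<1 .
\]
Hence some outcome of the random selection leaves every protected cell unchanged. Fix such an outcome: a leaf $Z$, its root-to-leaf path $p_d=\lfloor Z/2^{h-d}\rfloor$, representatives in every signed block on the path, and representatives in all marker and dummy blocks.

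Next I run the propagation argument of Section~\ref{sec:distance} on this outcome. That argument never uses the XOR structure. It uses only four facts. First, for each $i$ the four frames $V_i^\pm$ and $W^\pm_{i,p_i\bmod 2}$ built from the selected positions are increasing tuples. Second, their twelve non-body entries agree with $H_4$. Third, the root diagonal cells are $x_0=1$ and $y_0=0$. Fourth, the signed paths meet in one physical leaf cell.

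These facts hold for any outcome, for the following reasons. Lemma~\ref{lem:framesorder} gives the orders and the twelve prescribed entries in $A_h$ for every node $p_i$ with parent $p_{i-1}$, whatever offsets are chosen inside the blocks, because the entries are block-constant. Those twelve entries are marker--marker, marker--variable or dummy--variable cells. Each therefore has a marker or dummy endpoint and is protected, so it keeps its value in $D$. The root diagonal cells are protected by fiat. Finally, the leaf blocks are singletons, so $x_h$ and $y_h$ are the same cell.

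With these facts in place, $H_4$-freeness of $D$ applied to each frame gives the implications of the forcing table. Induction on $i$ then yields $x_h=1$ and $y_h=0$ for one cell, a contradiction. Therefore $|E|\ge m^2$.

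The only delicate point is checking that the protected set in Lemma~\ref{lem:twocoord} really contains every cell the propagation needs. In particular, the dummy incidences must be covered: in the $V$ frames the column dummy meets two variable rows, and in the $W$ frames the row dummy meets two variable columns. These cells have a dummy endpoint, so they are protected. All body cells ($\alpha_i,\beta_i,x_i,y_i$) are variable--variable cells off the root and are deliberately left unprotected; the argument treats them as free unknowns.

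I also need the marker frame at level $i$ to use mode $W^\pm_{i,p_i\bmod2}$, so that the role table places $R_i^\pm(p_i)$ and $C_i^\pm(p_i)$ in the correct parity class. This is built into the choice of frame and needs no randomness.
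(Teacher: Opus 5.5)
Your proposal is correct and follows the paper's proof essentially step for step: the union bound over Lemma~\ref{lem:twocoord} gives an outcome whose protected cells are all unedited, and the four forcing implications of Section~\ref{sec:distance} then force the shared leaf cell to be both $1$ and $0$; as you note, those implications use only the path structure, not the XOR coupling. One small slip: the twelve non-body entries of a frame are marker--marker, marker--variable or marker--dummy cells, while the dummy--variable incidences lie in the body, but each of the twelve has a marker endpoint, so your protection argument is unaffected.
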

\begin{proof}
Let $E$ be an arbitrary set of changed cells with $|E|<m^2$.
By Lemma~\ref{lem:twocoord} and the union bound,
\[
 \mathbb P\bigl(E\cap\mathrm{Protected}\ne\varnothing\bigr)
 \le \sum_{e\in E}\mathbb P(e\in\mathrm{Protected})
 \le |E|m^{-2}<1.
\]
A selection therefore exists whose protected cells avoid every
edit. The forcing frames along its common path are valid and have
all twelve marker incidences unchanged. The same four implications
as in Section~\ref{sec:distance} propagate the root value one
through the plus chain and the root value zero through the minus
chain. Both implications concern the same selected terminal cell;
they are inconsistent in an $H_4$-free edited matrix. Thus $E$
cannot eliminate all copies.
\end{proof}

The probabilistic argument uses independence \emph{between axes}
whenever one position is in a marker or dummy block. It does not
require independence of the variable representatives on different
levels of the same axis, nor of the two terminal path labels.
The deterministic XOR proof goes further: each individual protected
cell identifies at most one of the $m^2$ selections. This is an exact
combinatorial packing statement, while the probabilistic proof
controls the probability of hitting each required cell. Both yield
the sharp integer threshold $m^2$.

\subsection{Boolean verification of the four implications}
The lower-right body $P$ is
$\left(\begin{smallmatrix}1&0\\1&1\end{smallmatrix}\right)$.
Each of the four forcing frames has three of its four body entries
fixed by the choice of mode or by the current inductive hypothesis.
The remaining entry is constrained by the absence of a completed
copy. The four exact incompatibilities are
\begin{align*}
 V_i^+:&\quad (x_{i-1},\alpha_i)=(1,0),\\
 V_i^-:&\quad (y_{i-1},\beta_i)=(0,1),\\
 W_{i,p_i\bmod2}^+:&\quad (\alpha_i,x_i)=(1,0),\\
 W_{i,p_i\bmod2}^-:&\quad (\beta_i,y_i)=(0,1).
\end{align*}
Each forbidden pair would complete the body $P$ and therefore
produce $H_4$ with the protected frame. The complement of each
pair yields exactly the implication in the forcing table. This
explicit truth-table reduction is independent of the arithmetic
classification certificates used for the upper bound.

%% file: additions/robust.tex
\section{Robustness of the edit obstruction under marker permutations}
\label{sec:robust}
Only a small part of the complete marker order enters the edit-distance
argument.  The original host places the two groups of each marker
mode consecutively and uses a particular ordering of all $6h$ modes.
The forcing argument itself requires only that the two groups of a
single mode occur in their given order and that the entire marker
prefix precede the variable and dummy blocks. This gives a useful
separation between the mechanism responsible for distance and the
mechanism responsible for terminal localization.

For each depth $h$, let $\pi_r$ and $\pi_c$ be two arbitrary
permutations of its $6h$ complete marker modes. Define
$A_h^{\pi_r,\pi_c}$ by permuting the marker \emph{row} blocks in
accordance with $\pi_r$ and marker \emph{column} blocks in
accordance with $\pi_c$, preserving the order of the two groups
inside each mode. Keep every block-constant entry, the two signed
tree orders, and the positions of the row and column dummy blocks
relative to the variable tree. The result is again a square matrix
of order $(14h+2)2^h$.

\begin{proposition}[Order-independent forcing lower bound]
\label{prop:permuted}
For every $h\ge1$ and every pair of complete-mode permutations,
\[
 D_{H_4}(A_h^{\pi_r,\pi_c})\ge 4^h.
\]
In particular, the lower bound holds against arbitrary binary edits
under independently reordered marker prefixes.
\end{proposition}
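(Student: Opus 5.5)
The plan is to rerun the deterministic lower-bound argument of Section~\ref{sec:distance} on the permuted host $A_h^{\pi_r,\pi_c}$. The point to check is that nothing in that argument used the global order of the marker modes. I will take the same family of $m^2=4^h$ XOR selections indexed by $(a,b)$, with the same offsets in the signed tree blocks, the marker blocks and the dummy blocks. I will also use the same notion of protected cell: a cell with at least one marker or dummy endpoint, or with both endpoints root variable positions. Permuting whole marker blocks on either axis does not change which block a position lies in, nor its offset inside that block. Hence Lemma~\ref{lem:injective} transfers verbatim: its proof recovers $(a,b)$ only from block identities, node labels and offsets. Corollary~\ref{cor:avoid} then gives the following. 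If fewer than $4^h$ cells are edited, some selection has all its protected cells unchanged.

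The second step is to check that each of the four forcing frames at depth $i$ is still an increasing tuple in the permuted host. This is the content of Lemma~\ref{lem:framesorder}, and I will reprove its order part under the weaker hypotheses. The two marker groups of a single mode still occur in the order group~1 before group~2, because the permutation acts on complete modes and preserves the internal group order. The whole marker prefix, in whatever order, still precedes every variable and dummy block. The relative order of the signed tree blocks and the two dummy blocks is unchanged, so ancestor/descendant comparisons and the positions of $R_*$ and $C_*$ are as before. Each frame uses only the two groups of one mode followed by two non-marker positions, so its rows and columns are increasing on each axis. All twelve prescribed entries are entries of the block-constant host, so they are unaffected by reordering blocks. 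The marker corner is $J_2-I_2$, the role supports give the two identity blocks, and the dummy incidences equal one.

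Finally, I will apply the propagation exactly as before. The protected root cells give $x_0=1$ and $y_0=0$. The four frames yield the implications $x_{i-1}=1\Rightarrow\alpha_i=1\Rightarrow x_i=1$ and $y_{i-1}=0\Rightarrow\beta_i=0\Rightarrow y_i=0$ in any $H_4$-free edit $D$. The leaf identification then forces the single cell $(R_h(z),C_h(z))$ to equal both $1$ and $0$, a contradiction; this is the argument of Proposition~\ref{prop:lower}.

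The only real obstacle is being careful that no step silently used the specific mode ordering. The likely places are the claim that the first two rows and columns of each frame precede the last two, and the injectivity decoding. I will verify that both depend only on block membership and on the prefix property. Cross-mode marker entries are zero, but frames never combine markers from two modes, so those entries do not matter.
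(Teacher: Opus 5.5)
Your proposal is correct and matches the paper's own proof. Both reuse the $m^2$ XOR selections, note that injectivity of protection depends only on block identities and offsets, which permuting whole modes leaves unchanged, and observe that each forcing frame uses only the two groups of a single mode. Since those groups keep their internal order and still precede every variable and dummy position on each axis independently, the propagation to the shared leaf cell goes through verbatim.
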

\begin{proof}
Use exactly the $m^2$ XOR-indexed selections from
Section~\ref{sec:distance}. The definition and injectivity of the
protected cells depend on physical block labels, their offsets,
the chosen tree path and the fact that marker and dummy blocks have
size $m$; permutation of whole marker modes changes none of these.
Therefore any edit set of size below $m^2$ misses the protected cells
of one selection.

For a fixed mode $t$, its two marker row groups appear in the order
$A_1(t)<A_2(t)$ and its marker column groups in the order
$B_1(t)<B_2(t)$. All four groups precede every variable and dummy
position, regardless of the position of other complete modes in the
prefix. The four forcing frames use only the two marker groups of
their own mode. Their increasing order and all the relevant
incidences therefore remain unchanged. The implications in the
forcing table again propagate $x_0=1$ and $y_0=0$ to the common leaf
cell, contradicting freeness. Thus any eliminating edit set has
size at least $m^2$.
\end{proof}

The upper bound $D_{H_4}(A_h)\le m^2$ and the exact copy count were
proved for the canonical mode order in Section~\ref{sec:host}.
Proposition~\ref{prop:permuted} records the broader range of the
forcing mechanism itself. The separately supplied Lean extension
contains the corresponding concrete-index theorem
\texttt{permuted\_matrix\_edit\_lower\_bound}; the primary
submission project retains its canonical order in the global
classification and removal statements.

\section{The selection map as an injective coding device}
\label{sec:selectionmap}
The XOR choice can be characterized without mentioning the particular
marker rules. Fix a word length $h$, and for each pair
$(a,b)\in\{0,\dots,2^h-1\}^2$ place a selected representative in
every signed variable block on the ancestral path of $a\oplus b$.
At a depth $d$, the selected node is the high-$d$-bit prefix of the
XOR, and the selected offset on the row axis (respectively column
axis) is the low $h-d$ bits of $a$ (respectively $b$).

\begin{lemma}[One-endpoint recovery]\label{lem:recovery}
Suppose the selected variable row at depth $d$ is known and the
selected marker-column offset is known. Then the pair $(a,b)$ is
uniquely determined. The analogous assertion holds with the two
axes reversed. A pair of selected root-variable positions also
uniquely determines $(a,b)$.
\end{lemma}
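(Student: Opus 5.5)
The argument is a direct decoding from the selection rule of Section~\ref{sec:distance}; it restates the case analysis of Lemma~\ref{lem:injective} in a form independent of the marker rules. Write $z=a\oplus b$ and fix a depth $d\le h$. The selected variable row at depth $d$ is determined by two data: its node, which is the high-$d$-bit prefix $p_d=\lfloor z/2^{h-d}\rfloor$, and its offset $a\bmod 2^{h-d}$. The selected marker (or dummy) column has offset $b$, because all marker and dummy blocks have size $m$ and the selection uses offset $b$ on every such column block.

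Assume the row position and the column offset are both known, so $p_d$, $a\bmod2^{h-d}$ and $b$ are known. Truncating the low $h-d$ bits commutes with bitwise XOR, which gives
\[
 \Bigl\lfloor\frac a{2^{h-d}}\Bigr\rfloor
 =p_d\oplus\Bigl\lfloor\frac b{2^{h-d}}\Bigr\rfloor .
\]
The quotient of $a$ by $2^{h-d}$ is therefore known. Combined with the known remainder $a\bmod2^{h-d}$, this recovers $a=2^{h-d}\lfloor a/2^{h-d}\rfloor+(a\bmod 2^{h-d})$. Together with $b$, the pair is determined.

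Two degenerate depths are covered by the same formula. At $d=h$ the offset carries no information, but $p_h=z$, so $a=z\oplus b$ directly. At $d=0$ the node is trivial and the offset is all of $a$. The reversed assertion, with a known variable column and a known marker-row offset $a$, is identical after swapping the roles of $a$ and $b$. The column node gives $p_d$, the identity $\lfloor b/2^{h-d}\rfloor=p_d\oplus\lfloor a/2^{h-d}\rfloor$ gives the quotient of $b$, and the column offset gives its remainder.

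For the final claim, the selected root-variable row lies in a block of size $2^h=m$ with offset $a\bmod m=a$. Likewise, the root column has offset $b$. The two positions therefore give $(a,b)$ directly.

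The only step needing care is the truncation identity, which holds bitwise: XOR acts independently on each bit, and floor division by $2^{h-d}$ simply discards the low bits. Apart from this, the remaining work is bookkeeping. The main thing to verify is that the offset conventions at the leaf depth, where the block size is $2^0=1$, and at the root depth, where the block size is $m$, match the general formula. Both were checked above.
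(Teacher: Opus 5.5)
Your proposal is correct and matches the paper's proof essentially step for step. The marker-column offset gives $b$, and the variable row gives $p_d$ and $a\bmod 2^{h-d}$. The identity $\lfloor a/2^{h-d}\rfloor=p_d\oplus\lfloor b/2^{h-d}\rfloor$ then recovers the quotient, and the leaf, root, reversed-axis and root-pair cases are handled the same way.
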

\begin{proof}
The marker-column offset equals $b$. The variable row determines
its node prefix $p_d$ and row offset $u=a\bmod2^{h-d}$. Since
bitwise XOR commutes with taking high-bit prefixes,
\[
 \lfloor a/2^{h-d}\rfloor
     =p_d\oplus\lfloor b/2^{h-d}\rfloor.
\]
The quotient on the left and remainder $u$ recover $a$. This
also holds for $d=h$, when the row offset is zero. Reversing the
axes recovers $b$ by the same equation. At $d=0$, the root node
is zero and the row and column offsets are exactly $a$ and $b$.
\end{proof}

This coding interpretation makes the protected-cell injectivity
transparent. Each protected cell contains at least one such
recoverable configuration; hence it belongs to at most one selected
path. The proof uses all $m^2$ labels simultaneously. Independent
choices of a random leaf and unrelated row and column offsets would
not provide this exact disjointness, even though some marginal
probabilities would still be uniform. The deterministic injection
also eliminates the need to estimate correlations among the
successive forcing gadgets.

%% file: additions/repair_spectrum.tex
\section{Which cells the canonical repair changes in an original copy}
\label{sec:repairspectrum}
The two terminal families give a precise description of the
intersection of each original copy with the canonical edit set
\[
 S_h=R_*\times\bigcup_{0\le p<m}C_h(p).
\]
Here $|S_h|=m^2$ is a count of physical cells. The classification
reveals that the same repair affects the two terminal forms with
different multiplicities.

\begin{proposition}[Repair-incidence spectrum]
\label{prop:repairspectrum}
Every type-I physical copy of $H_4$ in $A_h$ contains exactly one
cell from $S_h$. Every type-II copy contains exactly two cells from
$S_h$. The number of original copies meeting $S_h$ in one cell is
$2m^6$, and the number meeting it in two cells is
$\frac12(m-2)m^5$.
\end{proposition}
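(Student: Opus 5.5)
By Lemma~\ref{lem:classification} every original copy is of type I or type II, so the plan is to intersect the $16$ cells of each terminal tuple with $S_h=R_*\times\bigcup_p C_h(p)$ directly.

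\textbf{Which cells can lie in $S_h$.} In both forms the only row from $R_*$ is the fourth row. So a cell of the copy lies in $S_h$ only if it is in the fourth row and its column is a leaf column $C_h(\cdot)$.

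\textbf{Type I.} The columns are $B_1(t),B_2(t),C_h(p),C^-_{h-1}(\lfloor p/2\rfloor)$. The two marker columns are not leaf columns. The parent column $C^-_{h-1}$ lies at depth $h-1<h$, so it is not a leaf column either; this is where $h\ge1$ and the separation of signed internal blocks from the identified leaf level enter. Hence the fourth row meets $S_h$ in exactly one cell, namely $(R_*,C_h(p))$.

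\textbf{Type II.} The columns are $B_1(t),C_h(q),C_h(p),C^-_{h-1}(\lfloor p/2\rfloor)$. Here the fourth row meets $S_h$ in exactly the two cells $(R_*,C_h(q))$ and $(R_*,C_h(p))$. These are distinct because $q<p$.

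As a consistency check, both forms have entries $1$ in these positions, matching the fourth row $(0,1,1,1)$ of $H_4$ at columns $2$ and $3$. This agrees with the repair setting exactly these cells from one to zero, although the count itself does not need this.

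\textbf{Counting copies.} The counts are then read off from Section~\ref{sec:count}. Type-I copies number $m\cdot 2m^5=2m^6$, and type-II copies number $\frac{m(m-2)}4\cdot 2m^4=\frac12(m-2)m^5$. The two families are disjoint with uniquely recovered parameters, so every copy falls into exactly one of the two incidence classes, giving the stated numbers.

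\textbf{Expected difficulty.} The only point needing care is the claim that the parent column block is never a leaf block and that marker columns are never in $\bigcup_p C_h(p)$. Both follow immediately from the host definition: marker blocks are disjoint from variable blocks, and internal levels $i<h$ carry separate signed blocks. So no real obstacle is expected.
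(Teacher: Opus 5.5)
Your proposal is correct and follows essentially the same route as the paper. You identify the fourth row as the only $R_*$ row and note that exactly one column is a leaf in type~I, versus the two distinct leaves $C_h(q),C_h(p)$ in type~II. You then obtain the counts from the exhaustive, disjoint terminal classification together with the weighted multiplicities of Section~\ref{sec:count}.
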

\begin{proof}
The unique row type belonging to $R_*$ in either terminal form is
the fourth row. In type I, the only chosen leaf column is the third
column $C_h(p)$; the two first columns are markers and the fourth
is the signed parent. Therefore precisely the fourth-row,
third-column physical cell belongs to $S_h$. In type II, both the
second and third columns are leaves, namely $C_h(q)$ and $C_h(p)$.
They are distinct since $q<p$. The first column is a marker and
the fourth is a signed parent. Thus exactly two cells of the copy
belong to $S_h$. The counting assertion follows from the exhaustive,
disjoint terminal classification and the weighted counts in
Section~\ref{sec:count}.
\end{proof}

\begin{corollary}[The canonical rectangle hits every old copy]
Every copy of $H_4$ present in $A_h$ has at least one entry changed
in the passage to $B_h$.
\end{corollary}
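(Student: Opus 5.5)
The statement to prove is the final corollary: every copy of $H_4$ present in $A_h$ has at least one entry changed in the passage to $B_h$. The plan is to read it off from Proposition~\ref{prop:repairspectrum}, which itself rests on the exhaustive classification of Lemma~\ref{lem:classification}.

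\textbf{Step 1: terminal form of an arbitrary copy.} Let $Q$ be any induced ordered copy of $H_4$ in $A_h$. By Lemma~\ref{lem:quotient}, $Q$ projects to an ordered block tuple whose entry table equals $H_4$. By Lemma~\ref{lem:classification}, this tuple has terminal form I or terminal form II. In both forms the fourth row type is the dummy $R_*$, and the third column type is the leaf $C_h(p)$.

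\textbf{Step 2: the cell in $S_h$.} Since leaf blocks are singletons, the fourth-row, third-column cell of $Q$ is the physical cell $(r,C_h(p))$ for some $r\in R_*$. This cell lies in $S_h=R_*\times\bigcup_p C_h(p)$, and $S_h$ is exactly the set of cells that the canonical repair changes from one to zero. It is a one in $A_h$ because every dummy--non-marker entry is one, which is consistent with $(H_4)_{43}=1$. So $Q$ has at least one changed entry. Equivalently, Proposition~\ref{prop:repairspectrum} says $Q$ meets $S_h$ in one or two cells, hence in at least one.

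\textbf{Anticipated difficulty.} There is essentially no obstacle beyond what is already proved. The only point to keep straight is that "copy present in $A_h$" refers to the original copies, which are the ones covered by the classification. Copies that might appear after the edits are a separate matter, and Lemma~\ref{lem:repair} excludes them anyway.
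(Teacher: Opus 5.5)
Your proposal is correct and follows the paper's argument. The paper invokes Proposition~\ref{prop:repairspectrum}, which says each terminal family meets $S_h$, together with the exhaustiveness of Lemma~\ref{lem:classification}; your Step~2 simply unpacks the fourth-row, third-column cell in $R_*\times C_h(p)$ on which that proposition rests.
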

\begin{proof}
Both terminal families meet $S_h$ by
Proposition~\ref{prop:repairspectrum}, and there are no other
original copies.
\end{proof}

There are two distinct assertions in the proof of the sharp distance.
The corollary identifies a small set intersecting the originally
present copies; Lemma~\ref{lem:repair} proves that after all cells
of this set are changed simultaneously, the resulting physical
matrix is completely $H_4$-free. The latter requires the repaired
entry-table classification, because a cell edit can participate in
configurations absent from the original matrix. The arithmetic
repair certificate establishes freeness directly for the edited
matrix, rather than inferring it from the hitting corollary.

The spectrum also provides a quantitative accounting check.
If one sums, over original physical copies, the number of their
entries inside $S_h$, then the answer is
\begin{equation}\label{eq:repairincidence}
 2m^6+2\cdot\frac{m-2}{2}m^5
   =(3m-2)m^5.
\end{equation}
This number counts copy--edit-cell incidences with multiplicity;
it is deliberately different from both the number of copies
$\left(\frac52-\frac1m\right)m^6$ and the cardinality $m^2$
of the editing rectangle. Equation~\eqref{eq:repairincidence}
can be verified independently from the terminal parameterization.

%% file: formalization.tex
\section{Arithmetic certificates and formalization}\label{app:formalization}
This appendix specifies the integer model used in the classification and
repair proofs. Its domain and entry rules are uniform in $h$, so the two
certificate lemmas have a single depth parameter.

\subsection{The integer block domain}\label{app:integer}
For an integer $h\ge1$, the admissible records are of three kinds:
\begin{align*}
 M(j,i,g)&=(0,j,i,g,0,0,0),\\
 U(d,p,s)&=(1,0,0,0,d,p,s),\\
 D_*&=(2,0,0,0,0,0,0).
\end{align*}
The seven fields are, in order, kind, mode, marker level, marker group,
variable depth, node, and sign. A marker satisfies
$0\le j<6$, $1\le i\le h$ and $g\in\{1,2\}$. The mode codes are
\[
 0=V^+,\quad 1=V^-,\quad 2=W_0^+,\quad 3=W_1^+,
 \quad 4=W_0^-,\quad 5=W_1^-.
\]
A variable satisfies $0\le d\le h$, $p\ge0$, and $d=0\Rightarrow p=0$.
Its sign is $s=2$ when $d=h$, and $s\in\{0,1\}$ when $d<h$, with
$0,1,2$ denoting plus, minus and leaf, respectively.

The physical encoding assigns the appropriate mode code to a marker,
copies the depth, node and sign of a variable, and sends the dummy to
$D_*$. Thus every physical block gives an admissible record. The integer
domain relaxes physical validity by omitting the bound $p<2^d$.

\subsection{Comparison and entry predicates}
For markers, comparison is lexicographic on
\[
 \bigl(\ind[j<4],i,j,g\bigr).
\]
This places modes 4 and 5 first, in precisely the marker order of
Section~\ref{sec:host}. Every marker precedes every non-marker. Among
non-markers the dummy follows variables on rows and precedes them on
columns.

For variable records, write $(o_r,c_r)=(1,0)$ and
$(o_c,c_c)=(0,1)$ for the opening and closing signs on the two axes.
Put $q'=\lfloor q/2\rfloor$ and $p'=\lfloor p/2\rfloor$. The integer
comparison $\prec_a$, for $a\in\{r,c\}$, is defined by
\begin{equation}\label{eq:integer-order}
 U(d,p,s)\prec_a U(e,q,t)\ \Longleftrightarrow\
 \begin{cases}
 p<q\ \lor\ (p=q\land s=o_a\land t=c_a),&d=e,\\
 p<q'\ \lor\ (p=q'\land s=o_a),&e=d+1,\\
 p'<q\ \lor\ (p'=q\land t=c_a),&d=e+1,\\
 \mathrm{true},&|d-e|\ge2.
 \end{cases}
\end{equation}
At equal depths and adjacent depths these are the exact comparisons of
the tree blocks. At larger depth gaps the relation permits every pair.

Let $E_h^0$ be the original integer entry function. Its marker--marker
rule is
\[
 E_h^0(M(j,i,g),M(k,\ell,u))=\ind[j=k\land i=\ell\land g\ne u].
\]
The marker roles are those of the table in Section~\ref{sec:host}, with
level and parity tested on the records. In a signed role, $U(d,p,2)$
belongs to either sign when $d=h$. Entries with a dummy and a non-marker
are one. The variable entries are exactly \eqref{eq:values}, with the
depth and sign conventions above. Let $E_h^1$ have the same rules, except
\[
 E_h^1(D_*,U(h,p,2))=0\qquad(p\ge0).
\]
These definitions retain every entry rule while relaxing node ranges and
some comparisons.

For $\eta\in\{0,1\}$, define $\mathcal C_h^\eta(r,c)$ for two
four-tuples of records by the following conditions: all eight records are
admissible, $r_i\prec_r r_j$ and $c_i\prec_c c_j$ for every
$1\le i<j\le4$, and
\[
 E_h^\eta(r_i,c_j)=(H_4)_{ij}\qquad(1\le i,j\le4).
\]
In particular, all six comparisons on each axis are part of the copy
predicate.

\begin{lemma}[Encoding a physical copy]\label{lem:encoding}
The physical block encoding maps every ordered block copy in $A_h$ to
a witness for $\mathcal C_h^0$, and every ordered block copy in $B_h$ to
a witness for $\mathcal C_h^1$.
\end{lemma}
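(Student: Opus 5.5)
The plan is to check the three ingredients of $\mathcal C_h^\eta$ separately for the image of a physical block copy: admissibility of the eight records, the six strict comparisons on each axis, and the sixteen entry equalities. By Lemma~\ref{lem:quotient}, an ordered block copy in $A_h$ or $B_h$ consists of four distinct row blocks and four distinct column blocks. They are strictly increasing in the block order of Lemma~\ref{lem:size}, and their block entry table equals $H_4$. We apply the physical encoding of Appendix~\ref{app:integer} to each block.

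\emph{Admissibility.} It is immediate from the definitions. A marker block of mode $V_i^\pm$ or $W_{i,b}^\pm$ with group $g$ receives a code $j<6$, level $1\le i\le h$ and $g\in\{1,2\}$. A variable block $R_d^\pm(p)$ or leaf $R_h(p)$ receives depth $d\le h$, node $p\ge0$ and sign in $\{0,1\}$ or $2$ according as $d<h$ or $d=h$. The root has $p=0$ because $p<2^0$. The dummy receives $D_*$.

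\emph{Comparisons.} We show that the physical order is sent into $\prec_r$ and $\prec_c$; any true physical comparison must be true in the relaxed integer comparison. For markers, the lexicographic key $(\ind[j<4],i,j,g)$ lists modes $4,5$ first in increasing level. It then lists modes $0,1,2,3$ in increasing level, and puts group $1$ before group $2$. This is exactly the prefix order of Section~\ref{sec:host}. Markers precede non-markers on both axes, and the dummy placement (after the tree on rows, before it on columns) matches. The main obstacle is the variable case of \eqref{eq:integer-order}. Here I would invoke Lemma~\ref{lem:wordorder} and split by depth gap. For $|d-e|\ge2$ the integer relation is true, so there is nothing to check. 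For $d=e$ the words are incomparable unless $p=q$. In the first case the first differing bit gives $p<q$. When $p=q$ the opening sign precedes the closing sign, which is the clause $s=o_a, t=c_a$; at the leaf depth $p=q$ cannot occur for distinct blocks. For $e=d+1$, either the parent $q'$ of $q$ differs from $p$, and then the order is decided by $p<q'$ as a prefix comparison at depth $d$, or $q'=p$. In the latter case the ancestor precedes exactly when its sign is the opening sign $o_a$. The case $d=e+1$ is symmetric: the ancestor follows the descendant exactly when it carries the closing sign $c_a$. I expect some care to be needed at $d=h$ or $e=h$, since the leaf carries sign $2$. In an ancestor--descendant pair the leaf is always the descendant, so only the ancestor's sign is tested. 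This is consistent with the asymmetric clauses of \eqref{eq:integer-order}.

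\emph{Entries.} The integer entry functions $E_h^0$ and $E_h^1$ were defined by transcribing the block entry rules. The marker--marker rule, the marker role supports (with the leaf counted in either signed role), the dummy--non-marker value one, and the variable values \eqref{eq:values} all agree with the physical block entries under the encoding. Hence $E_h^0(r_i,c_j)=A_h(\text{blocks})=(H_4)_{ij}$. For $B_h$ the only changed block entries are $R_*\times C_h(p)$, which become zero, and this is precisely the single exceptional rule $E_h^1(D_*,U(h,p,2))=0$. So $E_h^1$ reproduces the entries of $B_h$. Together with the preserved comparisons and admissibility, this yields witnesses for $\mathcal C_h^0$ and $\mathcal C_h^1$ respectively.
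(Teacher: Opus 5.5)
Your proposal is correct and follows the paper's proof essentially step for step. It checks admissibility of the encoded records, then shows that every true physical block comparison implies the relaxed integer comparison by cases: markers by the lexicographic key, marker/non-marker and dummy placement, and variables split by depth gap with the opening/closing-sign clauses. It finishes with agreement of the entry rules, the canonical change matching the single exceptional rule $E_h^1(D_*,U(h,p,2))=0$; your explicit use of Lemma~\ref{lem:wordorder} and your remark on the leaf always being the descendant spell out what the paper states more briefly.
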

\begin{proof}
Physical marker levels and groups satisfy the integer ranges. For a
variable, physical validity gives its depth and sign, the nonnegative
node label, and the root condition. Thus the encoding preserves validity.

For comparisons, a marker pair has the same lexicographic key on both
axes. A marker--non-marker pair and a dummy--variable pair follow the
stated axis rules. For two variables at equal depth, node order and the
opening and closing signs give the first clause of
\eqref{eq:integer-order}. At adjacent depths the ancestor node is the
integer quotient by two, giving the second and third clauses. At a gap
of at least two levels the relaxed clause is satisfied. Hence all physical
comparisons imply their integer counterparts, separately for rows and
columns.

Finally, marker pairs use the same mode and group test, mixed marker
pairs use the same role test, and the non-marker entries use the same
dummy and variable rules. The canonical change is precisely on the
encoded dummy-row--leaf-column products. All sixteen entry equalities
are therefore preserved. These arguments are the encoding lemmas
\code{encode_valid}, \code{encode_before},
\code{encode_original_entry} and \code{encode_repaired_entry}; together
they give \code{encode_copy}.
\end{proof}

\subsection{The certificate statements}
Define $T_h(r,c)$ to mean that there exist $j\in\{4,5\}$ and $p\ge0$
such that
\begin{align*}
 r&=\bigl(M(j,h,1),M(j,h,2),U(h,p,2),D_*\bigr),\\
 c&=\bigl(M(j,h,1),v,U(h,p,2),U(h-1,\lfloor p/2\rfloor,1)\bigr),
\end{align*}
where either $v=M(j,h,2)$, or $v=U(h,q,2)$ with
$0\le q<p$ and $q\equiv p\pmod2$. The marker parity is recovered
from the entry conditions when this predicate is decoded in the physical
host.

\begin{lemma}[Arithmetic classification and repair]\label{lem:arithmetic}
For every integer $h\ge1$ and every pair of four-tuples $r,c$ of integer
records,
\begin{equation}\label{eq:certificate-claims}
 \mathcal C_h^0(r,c)\ \Longrightarrow\ T_h(r,c),
 \qquad
 \mathcal C_h^1(r,c)\ \Longrightarrow\ \bot.
\end{equation}
\end{lemma}
\begin{proof}[Certificate proof]
The first implication is the Lean theorem
\code{Arithmetic.classification}, and the second is
\code{Arithmetic.repaired_no_copy}. The classification derivation has
266 source segments and the repair derivation has 243. Their steps are
propositional inferences, linear integer arithmetic lemmas, and
congruence proofs. The transparent formula abbreviations express the
admissibility, comparison and entry predicates specified above.

The two bridge modules instantiate the final derivations with the fields
of $r$ and $c$. Both copy predicates supply eight validity conditions,
twelve pairwise comparisons and all sixteen entry conditions. The
classification derivation needs only fifteen of the entry conditions:
it does not use the condition at $(0,0)$ (with zero-based indices).
The repair derivation uses all sixteen. For classification the negation
of $T_h$ gives the counterexample premise; for repair the premises are
the repaired copy conditions themselves. Thus the omitted classification
premise strengthens that certificate without weakening the definition
of an induced copy.
The resulting contradictions establish \eqref{eq:certificate-claims}.
The derivations and the bridge proofs are checked by the Lean kernel.
\end{proof}

To decode the first conclusion, the leaf-row entry against the first
marker column is one. The role test consequently gives
$j=4+(p\bmod2)$. Physical validity then supplies $p<2^h$ and the valid
minus parent. This is the theorem \code{terminal_decodes}, which yields
the terminal families in Section~\ref{sec:classification}.

\subsection{Formal statements and verification}
The matrix family \code{standardHost} uses the usual order on
$\operatorname{Fin}(n_h)$ on both axes. Its copy predicate
uses increasing four-tuples and all sixteen Boolean equalities. Its
edit distance is the attained minimum Hamming cost over pattern-free
binary matrices. The principal mathematical interfaces are
\begin{center}\small
\begin{tabular}{p{47mm}p{99mm}}\toprule
Result&Lean declaration\\\midrule
Physical size and order&\code{physicalPairs_length}, \code{orderedPosition_order_iff}\\
Matrix copy bridge&\code{standard_copy_iff}\\
Terminal classification&\code{classification_iff}\\
Repaired freeness&\code{canonical_standard_free}\\
Exact copy count&\code{exact_copy_count}\\
Arbitrary-edit lower bound&\code{arbitrary_edit_lower_bound}\\
Exact edit distance&\code{exact_edit_distance}, \code{exact_normalized_distance}\\
Asymptotic ratio&\code{normalized_copy_ratio_tendsto}\\
Removal obstruction&\code{no_polynomial_removal_bound}\\
Main theorem&\code{end_to_end_main_theorem}\\\bottomrule
\end{tabular}
\end{center}

The XOR construction is developed in \code{XorSampling.lean}; its counting
argument is in \code{FiniteAvoidance.lean}. Terminal parameters are counted
in \code{TerminalCount.lean}, and their physical realizations are
enumerated in \code{CopyFibres.lean}. The module \code{GlobalCounting.lean}
identifies their disjoint union with the full copy enumeration.
\code{EditDistance.lean} transports Hamming cost through the
order-preserving bijections and proves attainment of the minimum.

The project pins Lean 4.24.0 and Mathlib at revision
\begin{center}\code{f897ebcf72cd16f89ab4577d0c826cd14afaafc7}.\end{center}
The clean verification procedure is run from the project root by
\begin{quote}\ttfamily\small
bash verify\_all.sh
\end{quote}
Its declaration audit selects every definition and theorem by its defining
project module, including private and anonymous declarations, and checks
the complete axiom dependency closure. Only \code{propext},
\code{Classical.choice} and \code{Quot.sound} are allowed. Current run
records, exact source and artifact hashes, and dependency provenance are
under \code{evidence/strict_build/}; supplied older logs are preserved
separately as historical evidence. Appendix~\ref{app:reproduce} describes
the build procedure, proof dependencies and retained verification records.

%% file: additions/compression.tex
\section{A uniform finite-witness principle for the host family}
\label{app:smallwitness}
The integer certificate in Appendix~\ref{app:formalization} handles
arbitrary depths directly. The same host also admits a finite-depth
reduction that is useful for independent specification checks.
We give its combinatorial proof in the notation of the present paper.
The statement applies to any square pattern whose row signatures and
column signatures are individually distinct. This structural result
is logically separate from the arithmetic certificate proof of
Theorem~\ref{thm:main}.

\begin{lemma}[Number of binary branching levels]\label{lem:branchlevels}
For a family of $v\ge1$ finite binary words, at most $v-1$ different
bit positions are the first divergence position of some pair before
either word terminates.
\end{lemma}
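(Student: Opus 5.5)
We argue by induction on $v$, using the binary trie of the family. Interpret a ``first divergence position'' of a pair $w\ne w'$ as the least index $k$ at which both words have a $k$th bit and those bits differ. We will attach each such position to an internal branching vertex of the trie and then bound the number of branching vertices by $v-1$.

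\textbf{Setup.} Insert the $v$ words into a binary trie, so each word corresponds to a root-to-vertex path. Suppose $k$ is the first divergence position of a pair $(w,w')$. Then $w$ and $w'$ share the prefix $u$ of length $k-1$, and at $u$ one continues with $0$ and the other with $1$. So $u$ is a vertex of the trie with both children present; call it a \emph{branching vertex}. The map sending a realized divergence position $k$ to any branching vertex $u$ at depth $k-1$ witnessing it is injective, since distinct positions give vertices at distinct depths. Hence the number of distinct divergence positions is at most the number of branching vertices.

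\textbf{Counting branching vertices.} We show by induction on $v$ that a trie built from $v$ words (duplicates allowed; they only reduce the count) has at most $v-1$ branching vertices.
\begin{itemize}
\item For $v=1$ the trie is a single path, so there are no branching vertices.
\item For $v\ge2$, look at the subtree rooted at each child of the root. If only one child carries words, pass to that child; the root is then not branching, and the branching vertices are exactly those in the subtree, which is a trie of the same words with first bits deleted (words ending at the root are discarded).
\item If both children carry words, say $v_0\ge1$ and $v_1\ge1$ words with $v_0+v_1\le v$, then induction gives at most $(v_0-1)+(v_1-1)+1\le v-1$ branching vertices, where the extra $1$ counts the root.
\end{itemize}
Equivalently, one can note that a rooted tree whose leaves and terminal vertices number at most $v$ has at most $v-1$ vertices of out-degree two. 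This follows from the usual identity relating degree-two vertices to leaves in a binary tree, with terminal vertices that are not leaves only lowering the count.

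\textbf{Main obstacle.} The only delicate point is the bookkeeping for words that are prefixes of others, i.e.\ words that terminate at internal vertices. The phrase ``before either word terminates'' in the statement ensures that such pairs produce no divergence position, so they create no branching. The induction handles this by discarding terminated words when passing to children. Everything else is routine.
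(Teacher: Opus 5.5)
Your proposal is correct and is essentially the paper's argument. Both build the trie, observe that a first divergence at position $k$ forces a depth-$(k-1)$ vertex with both children present, and bound such vertices by $v-1$. Your ``equivalently'' remark is precisely the paper's contraction and leaf-count step: a rooted tree with out-degrees at most two has exactly one more leaf than vertices of out-degree two, and every trie leaf is a word endpoint.

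One step needs repair: your root-splitting argument is not literally an induction on $v$. In the one-child case all $v$ words can pass to the subtree (for example the words $00,01$), so the hypothesis for smaller $v$ does not apply. Induct on the total length of the words instead, or simply rely on the leaf-count identity, which is complete on its own.
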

\begin{proof}
Form the trie of the given words, regarding a word that is a proper
prefix of another as a marked internal vertex. A first divergence
occurs only at a vertex with two nonempty child branches. A rooted
tree with $v$ distinct marked word endpoints has at most $v-1$ such
branching vertices: each branching vertex separates at least two
components containing marked endpoints, and contracting all degree-one
vertices yields a rooted tree with at most $v$ leaves after marked
internal vertices are represented as terminal branches. Consequently
the number of distinct branching depths is at most $v-1$.
\end{proof}

\begin{lemma}[Retained-level preservation]\label{lem:retention}
Choose at most $2k$ row or column quotient types from the host at
height $h$. Let $S\subseteq\{0,\dots,h\}$ contain $0,h$, the
defining depth of each selected marker or variable type, the
immediately preceding depth of each such positive defining depth,
and every first-divergence level of a pair of selected variable
words. Write $S=\{s_0=0<s_1<\cdots<s_{h'}=h\}$. Replace every
selected defining depth $s_i$ by $i$ and every selected node word
by its bits in retained positive positions. Then the compressed
types preserve all required block comparisons and entries of both
the original host and its canonical repair.
\end{lemma}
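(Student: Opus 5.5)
We check the two kinds of requirement separately: the order comparisons between selected types, and the entry values. Everything in the host is determined by a few local data of each type. For a marker this is its mode, level and group. For a variable it is the depth, the node word and the sign. The plan is to show that each comparison and each entry depends only on information that the compression $s_i\mapsto i$, together with deletion of the bits at unretained positions, leaves intact.

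\emph{Comparisons.} Markers are ordered lexicographically by $(\ind[j<4],i,j,g)$. The map $s_i\mapsto i$ is strictly increasing on $S$, and every selected marker level lies in $S$. So the marker keys, and with them all marker--marker and marker--non-marker comparisons, are preserved. Dummy comparisons do not involve depths at all. For two variable types, Lemma~\ref{lem:wordorder} reduces their order to three cases:
\begin{enumerate}[label=(\roman*)]
\item the first bit at which the two words differ;
\item a proper-prefix relation together with the opening sign;
\item equal words together with the sign.
\end{enumerate}
In case (i) the first divergence level is in $S$ by hypothesis, so the retained bits still diverge there, and first. Every bit removed before that level is a common prefix bit, so no earlier divergence can appear. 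In cases (ii) and (iii) every retained bit of the shorter word is also retained in the longer one at the same compressed position. Hence the prefix relation, or the equality, survives, and the signs are unchanged. One checkable detail: the compressed node of a retained depth must be well defined, since the word at depth $s_i$ becomes a word of length $i$.

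\emph{Entries.} Marker--marker entries depend only on the mode, level and group, which are preserved. A marker--non-marker entry is a role test. Roles refer to level $i$ or $i-1$ of the marker's mode, plus a parity bit in the $W$ modes. The level $i$ and the preceding level $i-1$ of each selected marker are in $S$, and by construction they compress to adjacent integers. The parity of a node at level $i$ is its last bit, which sits at the retained position $i$. The variable entries in \eqref{eq:values} come in two kinds. Same-depth entries compare $p$ with $q$, either with $\le$ or with $<$. Adjacent-depth entries compare $\lfloor p/2\rfloor$ with $q$. Numerical comparison of two words of equal length is decided by their first divergence. If they are equal, the compressed words are also equal. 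Otherwise the divergence level is retained and the order is preserved. The floor $\lfloor p/2\rfloor$ is the parent word, and it lies at the retained depth $s_{i-1}$ adjacent to $s_i$. The repaired entry, dummy row against leaf column, is preserved because $h\in S$ maps to $h'$, so leaves stay leaves.

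\emph{Main obstacle.} The hard part is the adjacent-depth entries and role tests. Compressing can make two depths adjacent that were not adjacent before, and then an integer comparison that held vacuously at a large gap in the original could become a genuine constraint. I would handle this by checking that every nonzero adjacency rule in the host is triggered only when the depth difference is exactly one. Since $S$ contains $s_i-1$ for every defining depth $s_i$, original adjacency coincides with compressed adjacency for all pairs that enter an entry rule. For depth pairs that become adjacent only after compression, I would check that both in the original and in the compressed host the entry is zero, because it is neither a defining pair nor a pair of a node with its parent. Order comparisons at such pairs are still given by Lemma~\ref{lem:wordorder}, and so are preserved by the argument above.
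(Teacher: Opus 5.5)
Your core argument matches the paper's proof and is correct:
\begin{itemize}
\item strict monotonicity of the rank map on $S$ preserves marker keys and equality of depths;
\item the retained predecessor depth handles the role tests and the parent comparisons;
\item the retained first-divergence bits preserve Lemma~\ref{lem:wordorder} and the numerical tests in \eqref{eq:values};
\item the retained final bit preserves parity.
\end{itemize}

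The flaw is in your ``main obstacle'' paragraph. You treat selected depths that ``become adjacent only after compression'' as a real case. You then claim their entry is zero in both hosts because they are not a node--parent pair. That justification is wrong. The rule $A_h(R_i^+(p),C_{i-1}^+(q))=\ind[\lfloor p/2\rfloor\le q]$ and its minus analogue hold for every $q$ at depth $i-1$, not only for the parent $q=\lfloor p/2\rfloor$. So a same-sign pair that became adjacent only in the compressed host could have compressed entry one, while its original entry is zero because the depth gap is at least two.

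What actually saves the argument is that this case is empty. The paper states this as a two-sided adjacency fact. Suppose $d<e$ are selected defining depths with $e-d\ge2$. Then $e$ is a positive defining depth, so $e-1\in S$ lies strictly between $d$ and $e$, and their images differ by at least two. Hence two selected depths are consecutive exactly when their images are consecutive.

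Your sentence ``original adjacency coincides with compressed adjacency'' points at this fact, but the next sentence contradicts it. Replace the zero-entry check with this one-line observation and the proof is complete.
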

\begin{proof}
The rank map preserves comparisons and equality of selected depths.
The predecessor of every selected positive depth is retained.
Therefore two selected depths are consecutive precisely when their
images are consecutive; a gap of length at least two retains a point
strictly between the endpoints. The first-divergence bit of every
pair is retained, so lexicographic order and the proper-prefix
relation are preserved. Since the final bit at any defining variable
depth is retained, so is its parity. Parent comparison at adjacent
depths is preserved because the relevant predecessor bit is retained.

These are exactly the relations appearing in the host rules:
marker pairs test equality of mode and inequality of group;
marker--variable pairs test family, sign, depth adjacency, and
parity; variable pairs test equal or adjacent depths with their
lexicographic or parent comparison. The dummy incidence depends
only on type; the repair tests dummy row against a terminal leaf
column. Every selected type survives, its sign and marker group are
unchanged, and these Boolean tests have the same values after
compression. Lemma~\ref{lem:wordorder} gives preservation of the
signed tree order. The marker mode order depends on family, retained
depth and group, so its relevant comparisons are preserved as well.
For equal node words at an internal level, the two signs remain
distinct and their opening-before-closing order is unchanged.
\end{proof}

\begin{theorem}[General $6k$-level small-witness theorem]
\label{thm:sixk}
Fix $k\ge1$ and a binary $k\times k$ matrix $F$ with pairwise
distinct rows and pairwise distinct columns. If $F$ occurs as an
induced ordered copy in $A_h$ or $B_h$ for some $h\ge1$, then it
occurs in the corresponding host at some depth
$1\le d\le\min\{h,6k\}$. Consequently, the whole original
(resp. repaired) host family is $F$-free if and only if its depths
$1,\dots,6k$ are all $F$-free.
\end{theorem}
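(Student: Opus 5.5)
The plan is to compress an occurrence at large depth to one at small depth, using Lemma~\ref{lem:retention} and Lemma~\ref{lem:branchlevels}. Suppose $F$ occurs in $A_h$ (or $B_h$) with $h>6k$; otherwise take $d=h$. By Proposition~\ref{prop:uniformquotient} an induced ordered copy projects to an increasing tuple of $k$ row quotient types and $k$ column quotient types that realizes $F$ on the quotient. The hypothesis that $F$ has distinct rows and columns is exactly what makes this projection land on distinct blocks. So it suffices to produce, at some depth $d\le 6k$, a quotient tuple with the same comparisons and the same entry table. That tuple then lifts back to a physical copy, again by Proposition~\ref{prop:uniformquotient}, since all blocks at depth $d$ are nonempty.

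The first step is to build the retained set $S$ required by Lemma~\ref{lem:retention} and bound its size. The selected types number at most $2k$. The set $S$ contains the depths $0$ and $h$, together with, for each selected marker or variable type, its defining depth and its predecessor; these contribute at most $2\cdot 2k=4k$ further depths. It also contains the first-divergence levels of pairs of selected variable words. The selected variable words number at most $2k$, so Lemma~\ref{lem:branchlevels} bounds the divergence levels by $2k-1$. A naive total is therefore $2+4k+2k-1=6k+1$ elements, and the compressed depth $h'=|S|-1\le 6k$. The count must be done with some care: the variables on each axis can be treated separately but then share the global depth set, so I would count the variable words of both axes together as one family of at most $2k$ words for Lemma~\ref{lem:branchlevels}.

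I would also note that $0$ and $h$ need not be counted separately when some defining depths already coincide with them. In any case $|S|\le 6k+1$, which gives $h'\le 6k$. Moreover $h'\ge1$, because $S$ contains both $0$ and $h\ge1$.

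The second step is to check that the compressed types are genuine admissible blocks of the host at depth $h'$. Rank relabelling sends a defining depth $s_i$ to $i$ and restricts node words to the retained bits. Under this map a level-$h$ leaf goes to a level-$h'$ leaf, and a marker level $i\ge1$ goes to a positive rank. Node words keep length equal to their new depth, so the labels satisfy $p<2^{d}$. Lemma~\ref{lem:retention} then gives preservation of all block comparisons, using Lemma~\ref{lem:wordorder} for the tree order, and of all entries of both $A$ and its repair $B$. In particular, the repair rule (dummy row against leaf column) survives because $h\mapsto h'$. Hence the compressed tuple realizes $F$ at depth $h'$.

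The converse direction of the ``consequently'' clause is trivial. The forward direction is the statement just proved, applied to each $h$.

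The main obstacle I expect is the bookkeeping in the size bound, namely ensuring that the total really is at most $6k+1$ and not $6k+2$ once the endpoints $0,h$ and the predecessors are included. A secondary concern is that predecessors of marker levels and first-divergence levels must be counted on the same global depth axis. If the naive count overshoots, I would first try absorbing $h$ into the leaf or marker defining depths. If no type is defined at $h$, then $h$ can be dropped after relabelling the top retained level as $h'$, with the repair entries checked as unaffected. Similarly, $0$ lies in $S$ for free as the predecessor of depth $1$, or else can be absorbed into the divergence count.
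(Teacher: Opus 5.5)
Your proposal is correct and is essentially the paper's proof. You project to quotient types via Proposition~\ref{prop:uniformquotient}, retain $0$, $h$, the defining depths with their predecessors, and the first-divergence levels of the variable words of both axes taken jointly. You then bound $|S|\le 2+4k+(2k-1)=6k+1$ via Lemma~\ref{lem:branchlevels} and lift the compressed tuple back at depth $|S|-1$ by Lemma~\ref{lem:retention}.

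The only caveat concerns the fallback in your last paragraph. Dropping $h$ from $S$ would make the top retained internal level the leaf level, where the two signs are identified and the variable entries change, so that step would be unsound. Your count already reaches $6k+1$, however, so the fallback is never needed.
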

\begin{proof}
By Proposition~\ref{prop:uniformquotient}, the given physical copy
projects to $2k$ types, distinct among the selected row types and
among the selected column types. Retain their defining levels and
predecessors, as well as $0,h$ and every first-divergence bit level
of the selected variable words. If $v\ge1$ is the number of
variable types, Lemma~\ref{lem:branchlevels} bounds the number of
first-divergence levels by $v-1\le2k-1$. Thus
\[
 |S|\le 2+2k+2k+(2k-1)=6k+1.
\]
If $v=0$, the still stronger bound $|S|\le2+4k$ holds. Since both
endpoints $0,h$ are in $S$, the reduced height
$d=|S|-1$ satisfies $1\le d\le\min\{h,6k\}$.
By Lemma~\ref{lem:retention}, the compressed types retain each
of the $k^2$ prescribed entries and the two increasing orders.
They remain distinct within either axis: different defining kinds,
levels or signs remain different, and distinct node words remain
different because their first divergence is retained. Every
compressed block has positive multiplicity, so representatives of
the new types furnish a physical copy of $F$ at depth $d$.
The converse direction of the final equivalence is immediate;
if some larger depth contained a copy, the first assertion would
supply a copy at one of the finitely many depths.
\end{proof}

\begin{corollary}[Twenty-four levels suffice for a copy witness]
\label{cor:twentyfour}
Any $H_4$ copy in $A_h$ or $B_h$ at arbitrary depth yields an
$H_4$ copy in a host of the same kind at depth at most $24$.
\end{corollary}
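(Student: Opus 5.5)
The plan is to obtain the corollary by specializing Theorem~\ref{thm:sixk} to $F=H_4$ with $k=4$. The first step is to check the hypotheses. The rows of $H_4$ are $(0,1,1,0)$, $(1,0,0,1)$, $(1,0,1,0)$ and $(0,1,1,1)$, which are pairwise distinct. The columns are $(0,1,1,0)^{T}$, $(1,0,0,1)^{T}$, $(1,0,1,1)^{T}$ and $(0,1,0,1)^{T}$, which are also pairwise distinct. This was already noted after \eqref{eq:pattern}, so $H_4$ is twin-free on both axes and Proposition~\ref{prop:uniformquotient} applies to it.

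Next, suppose a copy of $H_4$ occurs in $A_h$ (respectively $B_h$) for some $h\ge1$. Theorem~\ref{thm:sixk} then supplies a copy in the host of the same kind at some depth $d$ with $1\le d\le\min\{h,6k\}=\min\{h,24\}\le 24$. The phrase ``of the same kind'' is exactly the original/repaired distinction maintained in Lemma~\ref{lem:retention}, which preserves the entries of the canonical repair as well as those of the original host. For concreteness I would recall the level count behind the bound: $|S|\le 2+2k+2k+(2k-1)=6k+1=25$, so the reduced height is $d=|S|-1\le 24$.

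There is no real obstacle here; all the work lies in Lemmas~\ref{lem:branchlevels} and~\ref{lem:retention}, which are assumed. The only point needing care is verifying twin-freeness. Without it, the type projection could place two physical rows in one block and the compression would not return a physical copy. As a remark, for $B_h$ the statement is vacuous by Lemma~\ref{lem:repair}. For $A_h$ it is consistent with Lemma~\ref{lem:classification}, since terminal copies already exist at every depth $1\le d\le 24$.
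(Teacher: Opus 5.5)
Your proposal is correct and matches the paper's proof, which simply applies Theorem~\ref{thm:sixk} with $k=4$, so that $6k=24$. Your explicit check that the rows and columns of $H_4$ are pairwise distinct makes visible a hypothesis the paper leaves implicit (it notes this right after \eqref{eq:pattern}). Your remarks that the statement is vacuous for $B_h$ and consistent with the classification for $A_h$ are accurate but not needed.
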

\begin{proof}
Apply Theorem~\ref{thm:sixk} with $k=4$.
\end{proof}

\begin{remark}[Proof scope]
The finite-witness theorem is available in the separately supplied
Lean development in the declarations
\code{physical_pattern_copy_has_small_witness}
and \code{matrix_pattern_copy_has_small_witness}. These exported
interfaces certify $1\le d\le6k$. The additional bound $d\le h$
in the paper statement follows from $S\subseteq\{0,\dots,h\}$ in
the displayed combinatorial construction; it is not part of those
exported Lean conclusions.
The main theorem of this article uses the arithmetic-classification
and arithmetic-repair route in the primary formal source release.
\end{remark}

\subsection{Finite decision and its precise meaning}
Theorem~\ref{thm:sixk} supplies an exact finite decision procedure
for the assertion that \emph{every depth} of the specified host family
is $F$-free. For each of the $6k$ allowed depths, enumerate ordered
quotient tuples with $k$ distinct row and $k$ distinct column types
and evaluate all $k^2$ entries. Absence at all these depths implies
absence in every larger host. The statement concerns a fixed explicit
family of hosts, rather than arbitrary $n\times n$ matrices; its
finite reduction does not impose a polynomial running-time bound in
$k$.  For $H_4$, the arithmetic approach provides a direct all-depth
proof without the need to run the finite enumerator.

%% file: additions/verify.tex
\section{Reproducible verification and the proof interfaces}
\label{app:reproduce}
The formalization has four components: physical block positions and their
multiplicities; the ordered quotient host; arithmetic classification and
repaired freeness; and transfer to matrices indexed by
$\operatorname{Fin}(n_h)$. The theorem
\code{end_to_end_main_theorem} combines these components with the
exact-count, arbitrary-edit distance and asymptotic results.

\subsection{Dependencies of the principal statements}
\begin{center}\small
\begin{tabular}{p{43mm}p{54mm}p{37mm}}
\toprule
Mathematical assertion&Lean declaration or module&Role\\\midrule
Arithmetic original classification&\code{Arithmetic.classification}&Copy exhaustion\\
Repaired freeness&\code{Arithmetic.repaired_no_copy}&No new copies\\
Encoding of physical copies&\code{Arithmetic.encode_copy}&Arithmetic soundness\\
Terminal decoding&\code{terminal_decodes}&Physical terminal forms\\
Numeric matrix order&\code{orderedPosition_order_iff}&Order transfer\\
Protected sampling&\code{XorSampling.lean}&Arbitrary edits\\
Finite avoidance&\code{FiniteAvoidance.lean}&Untouched witness\\
Copy multiplicity&\code{CopyFibres}, \code{TerminalCount}&Exact count\\
Edit distance&\code{EditDistance.lean}&Exact minimum\\
Asymptotics&\code{Limits.lean}&Removal failure\\
\bottomrule
\end{tabular}
\end{center}

The arithmetic classification and repair derivations contain 266 and
243 source segments, respectively. The encoding checks all eight
validity conditions, twelve axis-order comparisons and sixteen pattern
entries. The quotient-to-physical correspondence and the numeric-order
bijections complete the passage to the ordinary indexed matrix.
The proof segments and every imported Lean module are distributed as
source files in the accompanying verification archive.

\subsection{Rebuild and axiom audit}
The project uses Lean 4.24.0 and pins Mathlib to revision
\begin{center}\code{f897ebcf72cd16f89ab4577d0c826cd14afaafc7}.\end{center}
From the \code{formal_verification/} directory, execute
\begin{quote}\ttfamily\small
bash verify\_all.sh
\end{quote}
The script retrieves the pinned dependency cache, builds the primary
library, checks the main theorem and its axiom audit, then rebuilds the
isolated structural-extension and exact-transversal projects. The main
library is complete with all arithmetic certificates, which are needed
for the terminal classification and canonical freeness results.
The compiled dependency audits permit only \code{propext},
\code{Classical.choice} and \code{Quot.sound}.

The archived primary build record reports a compilation of
550 project modules, with 116,338 declarations (93,189 theorems)
in the strict declaration audit. The independent extension-core
build record covers 34 library modules and 1,570 declarations
(991 theorems). The exact-transversal supplement records 49 declarations,
including 32 theorems, and derives terminal exhaustion from the primary
global classification theorem. The corresponding run reports and source
identities are retained under \code{evidence/} and in the supplemental
project directories. The archived records identify the source and environment of earlier
compilations; the command above performs a new build on the recipient's machine.

\subsection{Mathematical cross-checks}
A direct check of the construction first verifies block-constant entries
and the four increasing forcing frames. The encoding implication then
maps every physical copy to the arithmetic host. The classification
certificate excludes all nonterminal copies; the repair certificate
excludes every copy after the canonical edit. Finally, weighted terminal
multiplicities give the exact number of original copies. The main
Lean theorem assembles these facts with the arbitrary-edit lower bound
and the asymptotic comparison.